\providecommand{\algorithmname}{Algorithm}
\newtheorem{theorem}{Theorem}[section]
\newtheorem{lem}{Lemma}[section]
\newtheorem{rem}{Remark}[section]
\newcounter{hypA}
\newenvironment{hypA}{\refstepcounter{hypA}\begin{itemize}
  \item[({\bf A\arabic{hypA}})]}{\end{itemize}}
\date{}
\begin{document}

\begin{center}

{\Large \textbf{Marginal Likelihood Computation for Hidden Markov Models via Generalized Two-Filter Smoothing}}

\vspace{0.5cm}

BY ADAM PERSING$^{1}$ \& AJAY JASRA$^{2}$ 

{\footnotesize $^{1}$Department of Mathematics,
Imperial College London, London, SW7 2AZ, UK.}\\
{\footnotesize E-Mail:\,}\texttt{\emph{\footnotesize a.persing11@ic.ac.uk}}\\
{\footnotesize $^{2}$Department of Statistics \& Applied Probability,
National University of Singapore, Singapore, 117546, SG.}\\
{\footnotesize E-Mail:\,}\texttt{\emph{\footnotesize staja@nus.edu.sg}}
\end{center}

\begin{abstract}
In this note we introduce an estimate for the marginal likelihood associated to hidden Markov models (HMMs)
using sequential Monte Carlo (SMC) approximations of the generalized two-filter smoothing decomposition \cite{Briers_2010}.
This estimate is shown to be unbiased and a central limit theorem (CLT) is established. 
This latter CLT also allows one to prove a CLT associated to estimates of expectations w.r.t.~a marginal  of the joint smoothing distribution; these form some
of the first theoretical results associated to the SMC approximation of the generalized two-filter smoothing decomposition. 
The new estimate and its application is investigated from a numerical perspective.\\
\textbf{Key Words}: Marginal Likelihood, Sequential Monte Carlo, Generalized Two-Filter Smoothing
\end{abstract}

\section{Introduction}

Hidden Markov models provide a flexible description of a wide variety of real-life phenomena; see \cite{cappe}. An HMM\ is a pair of discrete-time stochastic
processes, $\left\{  X_{n}\right\}  _{n\mathbb{\geq}0}$ and $\left\{
Y_{n}\right\}  _{n\geq1}$, where $X_{n}\in\mathbb{R}^{d_{x}}$ is an unobserved
process and $y_{n}\in\mathbb{R}^{d_{y}}$ is observed. The hidden process
$\left\{  X_{n}\right\}_{n\geq 0}$ is a  Markov chain with initial density $\delta_{x_{0}}  $ at time $0$ and transition density $f_{\theta}\left(x_{n}|x_{n-1}\right)
$, with $\theta\in\Theta\subseteq\mathbb{R}^{d_{\theta}}$ i.e.
$\mathbb{P}_{\theta}(X_{0}\in A)=\delta_{x_{0}}(A)$ and
$\mathbb{P}_{\theta}(X_{n}\in A|X_{n-1}=x_{n-1})=\int_{A}f_{\theta}(x_{n}|x_{n-1}%
)dx_{n}~n\geq1~$
where $\mathbb{P}_{\theta}$ denotes probability, $A\subseteq\mathbb{R}^{d_{x}}$, $\delta_{x_{0}}$ is the
Dirac measure with mass at $x_0$, and $dx_{n}$ an assumed dominating measure.
In addition, the observations
$\left\{  Y_{n}\right\}  _{n\geq1}$\ conditioned upon $\left\{  X_{n}\right\}
_{n\mathbb{\geq}0}$ are statistically independent and have marginal density
$g_{\theta}\left(y_{n}|x_{n}\right)  $, i.e.%
$\mathbb{P}_{\theta}(Y_{n}\in B|\{X_{k}\}_{k\geq 0}=\{x_{k}\}_{k\geq1})=\int_{B}%
g_{\theta}(y_{n}|x_{n})dy_{n}~n\geq1$
with $B\subseteq\mathbb{R}^{d_{y}}$ and $dy_{n}$ the dominating measure. The HMM described above is often referred to in the literature as a state-space model. 
Here $\theta$ is a static parameter, which is fixed throughout and we shall only be concerned with scenario that one observes a batch data set $y_{1:T}:=(y_1,\dots,y_T)$.
The joint density of the observations $p_{\theta}(y_{1:T})$ is termed the marginal likelihood. For most models of practical interest, this quantity cannot be evaluated exactly.
A popular collection of approximation techniques for HMMs, which can estimate the marginal likelihood are SMC methods.


SMC techniques simulate a collection of $N$ samples in parallel, sequentially in time and combine importance sampling and resampling to approximate
a sequence of probability distributions of increasing state-space known up-to an additve constant; see \cite{Doucet_2001} for an introduction. These techniques provide a natural estimate of the marginal likelihood of HMMs
(as well as for normalizing constants of Feynman-Kac representations; see \cite{delmoral}).  The estimate
is quite well understood and is known to be unbiased \cite{delmoral}
and the relative variance is known to increase linearly with $T$ \cite{cerou1,whiteley1}. However, the standard SMC estimate
is not the only alternative one can consider. A relatively recent procedure designed for smoothing, is based upon the \emph{generalized} two-filter decomposition (see e.g.~\cite{bresler} for the two-filter smoothing decomposition). Roughly, the idea is to run two independent SMC algorithms, one forwards
(as before) and one backwards (which approximates a collection of appropriately defined target distributions) and for them to `meet' at some point. Using this procedure, one can yield more efficient schemes for smoothing, relative
to standard SMC procedures. In the following note we:
\begin{enumerate}
\item{Introduce a new estimate, costing $\mathcal{O}(N)$, of the marginal likelihood using the generalized two-filter smoothing decomposition.}
\item{Establish that this estimate is unbiased and prove a CLT, under some assumptions.}
\item{Numerically investigate the estimate.}
\end{enumerate}
It is remarked that via 2.~we can also establish a CLT for an estimate of expectations w.r.t.~a marginal of the joint smoothing distribution.

This note is in two halves; the first focuses on the idea from a methodological perspective. The second is the proof of our results in point 2. The note is structured as follows:
in Section \ref{sec:two_filter} we discuss the estimate and our main result. In Section \ref{sec:pmcmc}  some simulations
investigating the new estimate are given; in particular, some comparisons to the forward filtering backward simulation (FFBSi) algorithm in \cite{douc}.
The proofs of our results are housed in the appendix.

\section{SMC and Generalized Two-Filter Smoothing}\label{sec:two_filter}


\subsection{SMC Algorithm}

We consider the joint smoothing distribution, with $\theta$ fixed:
\begin{equation}
\pi_{\theta}(x_{1:T}|y_{1:T}) = 
\frac{\prod_{n=1}^T g_{\theta}(y_n|x_n)f_{\theta}(x_n|x_{n-1})}{\int_{\mathbb{R}^{Td_x}}\prod_{n=1}^T g_{\theta}(y_n|x_n)f_{\theta}(x_n|x_{n-1}) dx_{1:T}}
\label{eq:smoother}
\end{equation}
the denominator is denoted $p_{\theta}(y_{1:T})$; this is the marginal likelihood.
We remark that throughout, the transition and observation densities can be time-inhomogeneous, but we omit this from our notation. One can construct an SMC algorithm to sample
sequentially from $\pi_{\theta}(x_{1}|y_{1}),\dots,\pi_{\theta}(x_{1:T}|y_{1:T})$. The idea is to use a collection of particles, simulated in parallel, which are written
$(\overrightarrow{X}_{1:n}^i)_{i\in\{1,\dots,N\}}$ to denote samples forward in time, the reason for the notation will become apparent below.  We will sometimes denote the index of a particle at time $n$ by $\overrightarrow{a}_n^i$, and we adopt the notation $\overrightarrow{x}_n^{\overrightarrow{a}_n^i} = \overrightarrow{x}_n^{a(i)}$. 

\begin{itemize}
\item{Step 1: For $i\in\{1,\dots,N\}$ sample $\overrightarrow{X}_1^i\sim q_{1,\theta}(\cdot)$ and compute the un-normalized weight:
$$
\overrightarrow{W}_1^i = \frac{g_{\theta}(y_1|\overrightarrow{x}_1^i)f_{\theta}(\overrightarrow{x}_1^i|x_0)}{q_{1,\theta}(\overrightarrow{x}_1^i)}.
$$
For $i\in\{1,\dots,N\}$ sample $\overrightarrow{a}_1^i\in\{1,\dots,N\}$ from a discrete distribution on $\{1,\dots,N\}$ with $jth$ probability
$
\overrightarrow{w}_1^j = \overrightarrow{W}_1^j/\sum_{l=1}^N \overrightarrow{W}_1^l
$
these represent the resampled particles. Set $n=2$.}
\item{Step 2: If $n=T+1$ stop. Otherwise, for $i\in\{1,\dots,N\}$ sample $\overrightarrow{X}_n^i|\overrightarrow{x}_{n-1}^{a(i)}\sim q_{n,\theta}(\cdot|\overrightarrow{x}_{n-1}^{a(i)})$ and compute the un-normalized weight:
$$
\overrightarrow{W}_n^i = \frac{g_{\theta}(y_n|\overrightarrow{x}_n^i)f_{\theta}(\overrightarrow{x}_n^i|\overrightarrow{x}_{n-1}^{a(i)})}
{q_{n,\theta}(\overrightarrow{x}_n^i|\overrightarrow{x}_{n-1}^{a(i)})}.
$$
For $i\in\{1,\dots,N\}$ sample $\overrightarrow{a}_n^i\in\{1,\dots,N\}$ from a discrete distribution on $\{1,\dots,N\}$ with $jth$ probability
$
\overrightarrow{w}_n^j = \overrightarrow{W}_n^j/\sum_{l=1}^N \overrightarrow{W}_n^l.
$
Set $n=n+1$ and return to the start of step 2.}
\end{itemize}

The estimate of the marginal likelihood is:
\begin{equation}
p_{\theta}^N(y_{1:T}) = \prod_{n=1}^{T} \left( \frac{1}{N}\sum_{l=1}^N \overrightarrow{W}_n^l\right).
\label{eq:smc_nc}
\end{equation}

\subsection{Generalized Two-Filter Smoothing}

It is well-known that the above SMC algorithm does not approximate the joint smoothing distribution at all well. One technique which is known to assist the simulation procedure (at least empirically) is the SMC approximation of the generalized two-filter representation \cite{Briers_2010}.
The algorithm works by defining two filters. One works as the SMC algorithm above and moves forward in time. The other works backward in time, on a sequence of densities defined below. These two algorithms `meet' at some pre-specified time $t\in\{1,\dots,T\}$.
 
Define the following sequence of densities (we use the convention $\prod_{\emptyset}=1$):
$$
\widetilde{\pi}_{\theta}(x_{n:T}|y_{n:T}) \propto \xi_{n,\theta}(x_t)g_{\theta}(y_n|x_n)
\bigg[\prod_{n=t+1}^T g_{\theta}(y_n|x_n)f_{\theta}(x_n|x_{n-1})\bigg] \quad n\in\{t,\dots,T\}
$$
where, at this stage, $\xi_{n,\theta}$ are a sequence of (essentially) arbitrary density functions w.r.t.~$dx_n$.
In practice, the $\xi_{n,\theta}$ are critical to the efficiency of the algorithm and we return to this point in Section \ref{sec:pmcmc}. 
We write the normalizing constant as $\tilde{p}_{\theta}(y_{n:T})$.
One can use SMC to approximate this sequence of densities.  We will sometimes denote the index of a particle at time $n$ by $\overleftarrow{a}_n^i$, and we adopt the notation $\overleftarrow{x}_n^{\overleftarrow{a}_n^i} = \overleftarrow{x}_n^{a(i)}$.
\begin{itemize}
\item{Step 1: For $i\in\{1,\dots,N\}$ sample $\overleftarrow{X}_T^i\sim q_{T,\theta}(\cdot)$ and compute the un-normalized weight:
$$
\overleftarrow{W}_T^i = \frac{\xi_{T,\theta}(\overleftarrow{x}_T^i)g_{\theta}(y_T|\overleftarrow{x}_T^i)}
{q_{T,\theta}(\overleftarrow{x}_1^i)}.
$$
For $i\in\{1,\dots,N\}$ sample $\overleftarrow{a}_T^i\in\{1,\dots,N\}$ from a discrete distribution on $\{1,\dots,N\}$ with $jth$ probability
$
\overleftarrow{w}_T^j = \overleftarrow{W}_T^j/\sum_{l=1}^N \overleftarrow{W}_T^l
$
these represent the resampled particles. Set $n=T-1$.}
\item{Step 2: If $n=t-1$ stop. Otherwise
For $i\in\{1,\dots,N\}$ sample $\overleftarrow{X}_n^i|\overleftarrow{x}_{n+1}^{a(i)}\sim q_{n,\theta}(\cdot|\overleftarrow{x}_{n+1}^{a(i)})$ and compute the un-normalized weight:
$$
\overleftarrow{W}_n^i = \frac{
\xi_{n,\theta}(\overleftarrow{x}_n^i)
g_{\theta}(y_n|\overleftarrow{x}_n^i)f_{\theta}(\overleftarrow{x}_{n+1}^{a(i)}|\overleftarrow{x}_n^i)}
{\xi_{n+1,\theta}(\overleftarrow{x}_{n+1}^{a(i)})
q_{n,\theta}(\overrightarrow{x}_n^i|\overleftarrow{x}_{n+1}^{a(i)})}.
$$
For $i\in\{1,\dots,N\}$ sample $\overleftarrow{a}_n^i\in\{1,\dots,N\}$ from a discrete distribution on $\{1,\dots,N\}$ with $jth$ probability
$
\overleftarrow{w}_n^j = \overleftarrow{W}_n^j/\sum_{l=1}^N \overleftarrow{W}_n^l.
$
Set $n=n-1$ and return to the start of step 2.}
\end{itemize}

One can estimate the normalizing constant $\tilde{p}_{\theta}(y_{t+1:T})$ by using a similar expression to \eqref{eq:smc_nc};
this estimate is denoted $\tilde{p}_{\theta}^N(y_{t+1:T})$.

\subsection{Two Estimates of the Marginal Likelihood}\label{sec:est_marginal_like}

The objective here is to consider how one can use generalized two-filter smoothing to estimate the marginal likelihood. One can consider \cite[Proposition 3]{Briers_2010} which states that
$$
p_{\theta}(y_{1:T}) = \int\pi_{\theta}(x_{t-1},y_{1:t-1}) \widetilde{\pi}_{\theta}(x_t,y_{t:T}) \frac{f_{\theta}(x_t|x_{t-1})}{\xi_{t,\theta}(x_t)} dx_{t-1:t}.
$$
After some standard calculations, one has
\begin{eqnarray*}
p_{\theta}(y_{1:T}) & = & p_{\theta}(y_{1:t-2})  \widetilde{p}_{\theta}(y_{t+1:T})\int 
q_{t-1,\theta}(x_{t-1}|x_{t-2})q_{t,\theta}(x_t|x_{t+1})\overrightarrow{W}_{t-1}(x_{t-2:t-1})
\overleftarrow{W}_{t}(x_{t:t+1}) \times 
\\ & &
\pi_{\theta}(x_{t-2}|y_{1:t-2})\widetilde{\pi}_{\theta}(x_{t+1}|y_{t+1:T})
\frac{f(x_t|x_{t-1})}{\xi_{t,\theta}(x_t)}
dx_{t-2:t+1}
\end{eqnarray*}
whence an SMC estimate, one filter run up-to time $t-1$ forward and the other run backward to time $t$ with no resampling at the final time step only, of the marginal likelihood is
\begin{eqnarray*}
p_{\theta}^N(y_{1:T}) & = & p_{\theta}^N(y_{1:t-2})  \widetilde{p}_{\theta}^N(y_{t+1:T})
\frac{1}{N^2}\sum_{i=1}^N\sum_{j=1}^N
\overrightarrow{W}_{t-1}(\overrightarrow{x}^{i}_{t-2:t-1})\overleftarrow{W}_{t}(\overleftarrow{x}^{j}_{t:t+1}) \frac{f(\overleftarrow{x}_t^{j}|\overrightarrow{x}_{t-1}^i)}{\xi_{t,\theta}(\overleftarrow{x}_t^{j})}
\end{eqnarray*}
This estimate is perhaps slightly undesirable as it has a computational cost of $\mathcal{O}(N^2)$. 

An alternative approach is to use the slightly modified representation
$$
p_{\theta}(y_{1:T}) = p_{\theta}(y_{1:t-1})  \widetilde{p}_{\theta}(y_{t+1:T})\int 
\pi_{\theta}(x_{t-1}|y_{1:t-1})\widetilde{\pi}_{\theta}(x_{t+1}|y_{t+1:T}) 
\frac{f_{\theta}(x_t|x_{t-1})f_{\theta}(x_{t+1}|x_{t})}{\xi_{t+1,\theta}(x_{t+1})}g_{\theta}(y_t|x_t)
dx_{t-1:t+1}.
$$
Again, after some simple manipulations, one arrives at the formula
\begin{eqnarray*}
p_{\theta}(y_{1:T}) & = & p_{\theta}(y_{1:t-2})  \widetilde{p}_{\theta}(y_{t+2:T})\int 
\pi_{\theta}(x_{t-2}|y_{1:t-2})\widetilde{\pi}_{\theta}(x_{t+2}|y_{t+2:T}) 
\overrightarrow{W}_{t-1}(x_{t-2:t-1})\overleftarrow{W}_{t+1}(x_{t+1:t+2}) \times
\\
& & 
q_{t-1,\theta}(x_{t-1}|x_{t-2})q_{t+1,\theta}(x_{t+1}|x_{t})
\frac{f_{\theta}(x_t|x_{t-1})f_{\theta}(x_{t+1}|x_{t})}{\xi_{t+1,\theta}(x_{t+1})}g_{\theta}(y_t|x_t)
dx_{t-2:t+2}.
\end{eqnarray*}
Now, if one runs the two forward and backward SMC algorithms up-to times $t-1$ and $t+1$ respectively, not resampling at the very final
time steps, one has the approximation:
\begin{eqnarray*}
p_{\theta}^N(y_{1:T}) & = & p_{\theta}^N(y_{1:t-2})  \widetilde{p}_{\theta}^N(y_{t+2:T})
\frac{1}{N^2}\sum_{i=1}^N\sum_{j=1}^N
\overrightarrow{W}_{t-1}(\overrightarrow{x}^{i}_{t-2:t-1})\overleftarrow{W}_{t+1}(\overleftarrow{x}^j_{t+1:t+2}) \times
\\
& & 
\int\frac{f_{\theta}(x_t|\overrightarrow{x}_{t-1}^i)f_{\theta}(\overleftarrow{x}_{t+1}^j|x_{t})}{\xi_{t+1,\theta}(\overleftarrow{x}^j_{t+1})}g_{\theta}(y_t|x_t)
dx_{t}.
\end{eqnarray*}
This quantity can be approximated using the following procedure in \cite{Fearnhead_2010}.
Consider a conditional density $q_{t,\theta}(x_t|x_{t-1},x_{t+1})$ and two probabilities
$\overrightarrow{\beta}_{t-1}^i$, $\overleftarrow{\beta}_{t+1}^j$ , $i,j\in\{1,\dots,N\}$
$\sum_{i=1}^N\overrightarrow{\beta}_{t-1}^i=1$, $\sum_{j=1}^N\overleftarrow{\beta}_{t+1}^j=1$. Sample
$i(1),j(1),\dots,i(N),j(N)$ using the $\overrightarrow{\beta}_{t-1}^i$, $\overleftarrow{\beta}_{t+1}^j$ and then,
for each pair $i(l),j(l)$ sample $X_t^l|x_{t-1}^{i(l)},x_{t+1}^{j(l)}$ from the distribution induced by 
$q_{t,\theta}(\cdot|x_{t-1}^{i(l)},x_{t+1}^{j(l)})$, which leads to the estimate, which only costs $\mathcal{O}(N)$:
\begin{eqnarray}
p_{\theta}^N(y_{1:T}) & = & p_{\theta}^N(y_{1:t-2})  \widetilde{p}_{\theta}^N(y_{t+2:T})
\frac{1}{N}\sum_{l=1}^N\frac{1}{N^2}
\overrightarrow{W}_{t-1}(\overrightarrow{x}^{i(l)}_{t-2:t-1})\overleftarrow{W}_{t+1}(\overleftarrow{x}^{j(l)}_{t+1:t+2}) \times
\nonumber\\
& & 
\frac{f_{\theta}(x_t^l|\overrightarrow{x}_{t-1}^{i(l)})f_{\theta}(\overleftarrow{x}_{t+1}^{j(l)}|x_{t}^l)}{\xi_{t+1,\theta}(\overleftarrow{x}^j_{t+1})
\overrightarrow{\beta}_{t-1}^{i(l)}\overleftarrow{\beta}_{t+1}^{j(l)}
q_{t,\theta}(x_t^l|x_{t-1}^{i(l)},x_{t+1}^{j(l)})}g_{\theta}(y_t|x_t^l).
\label{eq:on_estimate}
\end{eqnarray}

\subsection{Unbiasedness and Central Limit Theorem}

We will give some analysis of the estimate \eqref{eq:on_estimate}; we denote this estimate
$p_{\theta}^N(y_{1:T})$.
We make an assumption (A1) which is detailed in the appendix. In addition, the notations for the expression of the asymptotic variance are also defined in the appendix.
For a function $\varphi:\mathbb{R}^d\rightarrow\mathbb{R}$ such that $\sup_{x\in\mathbb{R}^d}|\varphi(x)|<+\infty$, we write $\varphi\in\mathcal{B}_b(\mathbb{R}^d)$.
$\mathcal{N}_d(\mu,\Sigma)$ denotes a $d-$dimensional normal distribution with mean $\mu$ and covariance $\Sigma$; if $d=1$ the subscript $d$ is omitted.

\begin{theorem}\label{theo:clt}
We have
$$
\mathbb{E}[p_{\theta}^N(y_{1:T})] = p_{\theta}(y_{1:T}) \quad \forall \theta\in\Theta.
$$
In addition, assume (A1). Then for fixed $T> 2$, $t\in\{3,\dots,T-2\}$ and any $\theta\in\Theta$ we have that
$$
\sqrt{N}(p_{\theta}^N(y_{1:T})-p_{\theta}(y_{1:T})) \Rightarrow Z_{\theta}
$$
where $Z_{\theta}\sim\mathcal{N}(0,\sigma_{t,T}^2(\theta))$ with
\begin{eqnarray*}
\sigma_{t,T}^2(\theta) & = & 
\sigma^2_{\overrightarrow{\gamma}_{t-1,\theta}}\big(\overrightarrow{W}_{t-1}
\overleftarrow{\gamma}_{t+1,\theta}[\overleftarrow{W}_{t+1}^{\xi}
I_{gf}(.,\cdot)]\big) 
+ \sigma^2_{\overleftarrow{\gamma}_{t+1,\theta}}\big(
\overleftarrow{W}_{t+1}^{\xi}\overrightarrow{\gamma}_{t-1,\theta}(\overrightarrow{W}_{t-1}I_{gf}(\cdot,.))
\big)
\end{eqnarray*}
where, $\varphi\in\mathcal{B}_b(\mathbb{R}^{2d_x})$
\begin{eqnarray*}
\sigma^2_{\overrightarrow{\gamma}_{t-1,\theta}}(\varphi) & = & \sum_{q=1}^{t-1} \overrightarrow{\gamma}_{q,\theta}(1)^2
\overrightarrow{\eta}_{q,\theta}\bigg(\Big[\overrightarrow{Q}_{q,t-1}(\varphi)-
\overrightarrow{\eta}_{q,\theta}(\overrightarrow{Q}_{q,t-1}(\varphi))\Big]^2\bigg)\\
\sigma^2_{\overleftarrow{\gamma}_{t,\theta}}(\varphi) & = &  \sum_{q=0}^{T-t-1}\overleftarrow{\gamma}_{T-q,\theta}(1)^2 \overleftarrow{\eta}_{T-q,\theta}\bigg(
\Big[\overleftarrow{Q}_{T-q,t+1}(\varphi)-\overleftarrow{\eta}_{T-q,\theta}\overleftarrow{Q}_{T-q,t+1}(\varphi)\Big]^2
\bigg).
\end{eqnarray*}
\end{theorem}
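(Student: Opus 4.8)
The plan is to exploit the product structure of the estimator \eqref{eq:on_estimate}, writing $p_{\theta}^{N}(y_{1:T}) = p_{\theta}^{N}(y_{1:t-2})\,\widetilde{p}_{\theta}^{N}(y_{t+2:T})\,R^{N}$ where $R^{N}=\tfrac{1}{N}\sum_{l=1}^{N}(\cdots)$ is the final (Fearnhead-type) combination term, and to condition throughout on the pair of $\sigma$-algebras $\overrightarrow{\mathcal{F}}$, $\overleftarrow{\mathcal{F}}$ generated respectively by the forward particle system run to time $t-1$ and the backward particle system run to time $t+1$ (neither resampling at its last step); these systems use independent randomness. For \emph{unbiasedness}, conditionally on $(\overrightarrow{\mathcal{F}},\overleftarrow{\mathcal{F}})$ the quantities $(i(l),j(l),X_{t}^{l})$ are sampled as prescribed, so the elementary importance-sampling identity yields $\mathbb{E}[R^{N}\mid\overrightarrow{\mathcal{F}},\overleftarrow{\mathcal{F}}]$ equal to the $\mathcal{O}(N^{2})$ double-sum estimator of the inner integral; the tower property then reduces the claim to showing that this $\mathcal{O}(N^{2})$ estimator is unbiased. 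Since it is a bilinear functional of the two empirical unnormalised Feynman--Kac measures and the filters are independent, its expectation factorises into $\overrightarrow{\gamma}_{t-1,\theta}\otimes\overleftarrow{\gamma}_{t+1,\theta}$ applied to the relevant function, using the classical unbiasedness of SMC unnormalised-measure estimators \cite{delmoral} for each filter; by \cite[Proposition 3]{Briers_2010} (reproduced above) this equals $p_{\theta}(y_{1:T})$.

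For the \emph{CLT}, I would set $I_{gf}(x_{t-1},x_{t+1}):=\int f_{\theta}(x_{t}|x_{t-1})f_{\theta}(x_{t+1}|x_{t})g_{\theta}(y_{t}|x_{t})\,dx_{t}$, note that $\mathbb{E}[R^{N}\mid\overrightarrow{\mathcal{F}},\overleftarrow{\mathcal{F}}]$ coincides with the bilinear expression $\overrightarrow{\gamma}_{t-1,\theta}^{N}\otimes\overleftarrow{\gamma}_{t+1,\theta}^{N}\big(\overrightarrow{W}_{t-1}\,\overleftarrow{W}_{t+1}^{\xi}\,I_{gf}\big)$, and decompose $\sqrt{N}\big(p_{\theta}^{N}(y_{1:T})-p_{\theta}(y_{1:T})\big)$ into four pieces: a forward piece $\sqrt{N}\big(\overrightarrow{\gamma}_{t-1,\theta}^{N}-\overrightarrow{\gamma}_{t-1,\theta}\big)\otimes\overleftarrow{\gamma}_{t+1,\theta}(\cdots)$, a backward piece $\sqrt{N}\,\overrightarrow{\gamma}_{t-1,\theta}\otimes\big(\overleftarrow{\gamma}_{t+1,\theta}^{N}-\overleftarrow{\gamma}_{t+1,\theta}\big)(\cdots)$, the mixed product piece, and the conditional combination fluctuation $\sqrt{N}\,p_{\theta}^{N}(y_{1:t-2})\widetilde{p}_{\theta}^{N}(y_{t+2:T})\big(R^{N}-\mathbb{E}[R^{N}\mid\overrightarrow{\mathcal{F}},\overleftarrow{\mathcal{F}}]\big)$. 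To the first two I would apply the standard SMC CLT for unnormalised Feynman--Kac measures (as in \cite{delmoral}) to each filter, with test functions $x_{t-2:t-1}\mapsto\overrightarrow{W}_{t-1}(x_{t-2:t-1})\,\overleftarrow{\gamma}_{t+1,\theta}[\overleftarrow{W}_{t+1}^{\xi}\,I_{gf}(x_{t-1},\cdot)]$ and $x_{t+1:t+2}\mapsto\overleftarrow{W}_{t+1}^{\xi}(x_{t+1:t+2})\,\overrightarrow{\gamma}_{t-1,\theta}[\overrightarrow{W}_{t-1}\,I_{gf}(\cdot,x_{t+1})]$; since the two filters use independent randomness the two Gaussian limits are independent, their asymptotic variances add, and one gets exactly the two displayed terms. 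The mixed product piece is $O_{\mathbb{P}}(N^{-1})$ by Cauchy--Schwarz, hence negligible after multiplication by $\sqrt{N}$. One then assembles everything via Slutsky's lemma, having first recorded the consistency statements ($p_{\theta}^{N}(y_{1:t-2})$ converging in probability to $p_{\theta}(y_{1:t-2})$, and similarly for $\widetilde{p}_{\theta}^{N}(y_{t+2:T})$) from the SMC law of large numbers.

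The step I expect to be the main obstacle is controlling the combination fluctuation $\sqrt{N}\big(R^{N}-\mathbb{E}[R^{N}\mid\overrightarrow{\mathcal{F}},\overleftarrow{\mathcal{F}}]\big)$: since $R^{N}$ is an average of $N$ terms that are i.i.d. given the two particle clouds, its conditional variance $\tfrac{1}{N}\mathrm{Var}\big(T_{1}\mid\overrightarrow{\mathcal{F}},\overleftarrow{\mathcal{F}}\big)$ (with $T_{1}$ a single summand) is a priori of the same order $\mathcal{O}(N^{-1})$ as the two filter errors, so it cannot be discarded for free — one must deploy the full strength of (A1), which restricts the proposals $q_{t,\theta}(\cdot|x_{t-1},x_{t+1})$, the mixing probabilities $\overrightarrow{\beta}_{t-1},\overleftarrow{\beta}_{t+1}$, and imposes the $\mathcal{B}_{b}$-type bounds, to show that, once combined with the outer filter errors by nested conditioning, this term does not alter the leading-order Gaussian limit. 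Closely tied to this is the need to bound in $L^{2}$ (or $L^{2+\delta}$) the generically unbounded ratios $\overrightarrow{W}_{t-1}/\overrightarrow{\beta}_{t-1}$, $\overleftarrow{W}_{t+1}/\overleftarrow{\beta}_{t+1}$ and $f_{\theta}f_{\theta}g_{\theta}/(\xi_{t+1,\theta}q_{t,\theta})$, and — because the combination indices $(i(l),j(l))$ are drawn from the \emph{random} particle clouds — to check that the cross-covariances between the forward, backward and combination errors vanish faster than $N^{-1}$. A minor bookkeeping point is that the backward targets $\widetilde{\pi}_{\theta}(x_{n:T}|y_{n:T})$, which carry the auxiliary factors $\xi_{n,\theta}$, must first be cast as a genuine Feynman--Kac flow so that generic SMC unbiasedness and the CLT apply to the backward filter verbatim.
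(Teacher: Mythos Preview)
Your overall architecture coincides with the paper's: condition on the two particle filtrations, reduce to the bilinear functional $\overrightarrow{\gamma}_{t-1}^{N}\otimes\overleftarrow{\gamma}_{t+1}^{N}(\overrightarrow{W}_{t-1}\overleftarrow{W}_{t+1}^{\xi}I_{gf})$, split into a forward piece, a backward piece, a cross piece and the combination fluctuation, then invoke the Del~Moral CLT on each filter separately, independence to add the variances, and Slutsky to assemble. Unbiasedness is handled exactly as you describe.

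Where you diverge from the paper is in your assessment of which piece is hard. You flag the combination fluctuation $\sqrt{N}\big(R^{N}-\mathbb{E}[R^{N}\mid\overrightarrow{\mathcal{F}},\overleftarrow{\mathcal{F}}]\big)$ as ``the main obstacle'', arguing that its conditional variance is a~priori $\mathcal{O}(N^{-1})$. Under (A1) this is not so: the assumption includes the uniform lower bounds $\underline{C}_{\theta}\leq \overrightarrow{\beta}_{t-1}^{i}$, $\underline{C}_{\theta}\leq \overleftarrow{\beta}_{t+1}^{j}$, $\underline{C}_{\theta}\leq q_{t,\theta}$, so each summand $T_{l}$ in $R^{N}$ satisfies $|T_{l}|\leq C_{\theta}/N^{2}$ \emph{deterministically} (the explicit $1/N^{2}$ factor in \eqref{eq:on_estimate} is never cancelled). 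One application of Marcinkiewicz--Zygmund then gives $\mathbb{E}\big[\big|\sqrt{N}(p_{\theta}^{N}-\mathbb{E}[p_{\theta}^{N}\mid\overrightarrow{\mathcal{F}},\overleftarrow{\mathcal{F}}])\big|\big]\leq C_{\theta}N^{-2}\,\mathbb{E}[\overrightarrow{\gamma}_{t-1}^{N}(1)\overleftarrow{\gamma}_{t+2}^{N}(1)]$, and the paper disposes of this term in two lines. There is no need for nested conditioning or cross-covariance bookkeeping here.

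Conversely, the paper's only technical lemma is devoted to what you call the ``mixed product piece'' and dismiss in one sentence as $O_{\mathbb{P}}(N^{-1})$ by Cauchy--Schwarz. Your intuition is correct, but the execution needs care: writing the cross term as $(\overrightarrow{\gamma}_{t-1}^{N}-\overrightarrow{\gamma}_{t-1})(\varphi^{N})$ with $\varphi^{N}(x_{t-1})=(\overleftarrow{\gamma}_{t+1}^{N}-\overleftarrow{\gamma}_{t+1})(\overleftarrow{W}_{t+1}^{\xi}I_{gf}(x_{t-1},\cdot))$, the test function $\varphi^{N}$ is \emph{random}, so a plain $\sqrt{N}$-bound on $\overrightarrow{\gamma}^{N}-\overrightarrow{\gamma}$ does not apply directly. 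The paper instead uses the martingale decomposition on the forward side, bounds each increment via Marcinkiewicz--Zygmund after normalising by $\sup_{x}|\varphi^{N}(x)|$, and controls $\mathbb{E}[\sup_{x}|\varphi^{N}(x)|^{2}]^{1/2}=O(N^{-1/2})$ uniformly in $x$ by appealing to the non-asymptotic variance bounds of C\'erou--Del~Moral--Guyader \cite{cerou1}; this is where the second part of (A1) (the mixing condition on the backward kernels) is actually used. Your Cauchy--Schwarz route can be made to work, but you will need the same uniform-in-$x$ $L^{2}$ control, which is the genuine technical content.
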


\begin{rem}\label{meetingpointremark}
Under some additional mixing conditions, one may establish that the asymptotic variance $\sigma^2_{t,T}(\theta)$ when divided by $p_{\theta}(y_{1:T})^2$
(i.e.~the asymptotic variance associated to a normalized estimate) obeys the following inequality:
$
\sigma^2_{t,T}(\theta)/p_{\theta}(y_{1:T})^2 \leq C_1(\theta) (t-1) + C_2(\theta)(T-t)
$
where the first term is the error from the forward filter and the second from the backward filter. Unfortunately, this provides little intuition on how to select $t$ and it simply implies that if the forward algorithm works better, one should
choose $t$ large and vice versa.
\end{rem}

\begin{rem}\label{remark2}
Let $\varphi:\mathbb{R}^{d_x}\rightarrow \mathbb{R}$, $\varphi\in\mathcal{B}_b(\mathbb{R}^{d_x})$,  and consider $\mathbb{E}_{\theta}[\varphi(X_t)|y_{1:T}]$ where $3\leq t \leq T-2$ and the expectation is w.r.t.~the joint smoothing distribution, with density \eqref{eq:smoother}. Using the ideas in 
Section \ref{sec:est_marginal_like} one can show that an estimator of $\mathbb{E}_{\theta}[\varphi(X_t)|y_{1:T}]$ is
$$
\frac{1}{p_{\theta}^N(y_{1:T})}
p_{\theta}^N(y_{1:t-2})  \widetilde{p}_{\theta}^N(y_{t+2:T})
\frac{1}{N^3}\sum_{l=1}^N
\overrightarrow{W}_{t-1}(\overrightarrow{x}^{i(l)}_{t-2:t-1})\overleftarrow{W}_{t+1}(\overleftarrow{x}^{j(l)}_{t+1:t+2})
\frac{\varphi(x_t^l)f_{\theta}(x_t^l|\overrightarrow{x}_{t-1}^{i(l)})f_{\theta}(\overleftarrow{x}_{t+1}^{j(l)}|x_{t}^l)g_{\theta}(y_t|x_t^l)}{\xi_{t+1,\theta}(\overleftarrow{x}^j_{t+1})
\overrightarrow{\beta}_{t-1}^{i(l)}\overleftarrow{\beta}_{t+1}^{j(l)}
q_{t,\theta}(x_t^l|x_{t-1}^{i(l)},x_{t+1}^{j(l)})}.
$$
Denote the estimate as $p_{\theta,t}^N(\varphi)/p_{\theta}^N(y_{1:T})$ and set $\mathbb{E}_{\theta}[\varphi(X_t)|y_{1:T}] = p_{\theta,t}(\varphi)/p_{\theta}(y_{1:T})$. Standard calculations reveal (e.g.~\cite[pp.~301]{delmoral}) that
$$
\frac{p_{\theta,t}^N(\varphi)}{p_{\theta}^N(y_{1:T})} - 
\frac{p_{\theta,t}(\varphi)}{p_{\theta}(y_{1:T})} = \frac{p_{\theta}(y_{1:T})}{p_{\theta}^N(y_{1:T})}
p_{\theta,t}^N\bigg(
\frac{1}{p_{\theta}(y_{1:T})}\Big[\varphi - \frac{p_{\theta,t}(\varphi)}{p_{\theta}(y_{1:T})}\Big]
\bigg).
$$
Now, upon inspection of the proofs in the appendix, one can easily deduce:
\begin{itemize}
\item{$p_{\theta}(y_{1:T})/p_{\theta}^N(y_{1:T})$ will converge in probability to 1.}
\item{Let $\widetilde{\varphi} = 1/p_{\theta}(y_{1:T})[\varphi - p_{\theta,t}(\varphi)/p_{\theta}(y_{1:T})]$, then
$$ 
\sqrt{N} p_{\theta,t}^N\bigg(
\frac{1}{p_{\theta}(y_{1:T})}\Big[\varphi - \frac{p_{\theta,t}(\varphi)}{p_{\theta}(y_{1:T})}\Big]
\bigg) \Rightarrow Z_{\theta}(\widetilde{\varphi})
$$
where $Z_{\theta}(\widetilde{\varphi})\sim \mathcal{N}(0,\sigma^2_{t,T}(\widetilde{\varphi}))$,
$$
\sigma_{t,T}^2(\theta)  = 
\sigma^2_{\overrightarrow{\gamma}_{t-1,\theta}}\big(\overrightarrow{W}_{t-1}
\overleftarrow{\gamma}_{t+1,\theta}[\overleftarrow{W}_{t+1}^{\xi}
I_{gf\widetilde{\varphi}}(.,\cdot)]\big) 
+ \sigma^2_{\overleftarrow{\gamma}_{t+1,\theta}}\big(
\overleftarrow{W}_{t+1}^{\xi}\overrightarrow{\gamma}_{t-1,\theta}(\overrightarrow{W}_{t-1}I_{gf\widetilde{\varphi}}(\cdot,.))
\big)
$$
and  for $(\tilde{x}_{t-1},\tilde{x}_{t+1})\in\mathbb{R}^{2d_x}$
$$
I_{gf\widetilde{\varphi}}(\tilde{x}_{t-1},\tilde{x}_{t+1}) = \int_{\mathbb{R}^{d_x}} g_{\theta}(y_t|x_t)\widetilde{\varphi}(x_t) f_{\theta}(\tilde{x}_{t+1}|x_t)f_{\theta}(x_t|\tilde{x}_{t-1}) dx_t.
$$
See the appendix for further definitions of the notations.}
\end{itemize}
Hence, on using Slutsky's Lemma, one has a univariate CLT for an approximation of $\mathbb{E}_{\theta}[\varphi(X_t)|y_{1:T}]$. One can follow the ideas of \cite[pp.~301-302]{delmoral} to prove a multivariate CLT. It may be possible
to compare this estimate (through the asymptotic variance) relative to the one produced by the forward filtering backward smoothing algorithm; see \cite{dds1,douc}.
\end{rem}

\begin{rem}
The unbiased property allows one to use the SMC approximation of the generalized two-filter representation within a particle Markov chain Monte Carlo \cite{Andrieu_2010} algorithm. In \cite{persing}, we have established an appropriate target distribution in this context.
\end{rem}

\section{Numerical Examples}\label{sec:pmcmc}

\subsection{Measuring the New Estimate's Sensitivity to $t$}
Consider the linear Gaussian model provided in Section 4 of \cite{Fearnhead_2010}:
$X_0 \sim \mathcal{N}_2\left( \mu_0, \Sigma_0 \right)$,  $X_{n+1} \mid \left( X_{1:n}=x_{1:n}, Y_{1:n}=y_{1:n} \right) \sim \mathcal{N}_2\left( Fx_{n}, Q \right)$, 
$Y_{n} \mid \left( X_{1:n}=x_{1:n}, Y_{1:n-1}=y_{1:n-1} \right) \sim \mathcal{N}\left( Gx_{n}, R \right)$, with
$$
\begin{array}{c}
G =\left(1,0\right)\\[0.5cm]
F =\left( \begin{array}{cc}
1 & 1 \\
0 & 1 \end{array} \right)
\end{array} 
\begin{array}{c}
R =\tau^2\\[0.5cm]
Q =\nu^2 \left( \begin{array}{cc}
\frac{1}{3} & \frac{1}{2} \\
\frac{1}{2} & 1 \end{array} \right)
\end{array}\\
$$

We ran the two-filter SMC algorithm to calculate the marginal likelihood via \eqref{eq:on_estimate} for the instance where $T=300$.  Our objective is to observe how the choice of $t$ affects the accuracy and precision of the new estimate.  
A Kalman filter is used to allow us to choose $\xi_{n,\theta}(x_n) = \pi_{\theta}(x_n|y_{1:n-1})$ (the predictor); this corresponds to an extremely favourable choice (indeed one recovers the FFBS procedure, when considering the smoother).
In all simulations, we used the optimal importance distributions and $\beta$ resampling weights as in Appendix A of \cite{Fearnhead_2010}.  We set $N=300$.
We ran nine versions of the two-filter algorithm, with $t\in\{T/10,2T/10,\dots,9T/10\}$.  In each case, we plotted the variability of the estimate and compared \eqref{eq:on_estimate} to the maximum likelihood estimate provided by the Kalman filter. 
We ran many simulations for different pairs of values of the state noise, $\nu^2$, and the observation noise, $\tau^2$; specifically, we looked at $225$ possible pairings where $\nu^2$ and $\tau^2$ each ranged from $1$ to $98$.  
The results are displayed in Figure \ref{fig:simos1}.

We found the same phenomenon across all pairs of values of $\nu^2$ and $\tau^2$.  There is an increase in the variance of \eqref{eq:on_estimate} as $t$ approaches $T$ (i.e., when the new estimate relies more on the forward filter and less on the backward filter).  Furthermore, we see the accuracy of the estimate fall as $t$ approaches $T$.  These results are in accordance with the degeneracy measures of the two filters.  The forward filter's effective sample size (ESS) stays around $200$, while the backward filter's ESS \emph{never} drops below $N=300$ (due to the choice of
$\xi_{n,\theta}$ and the various proposals adopted).  The choice for $\xi_{n,\theta}$ ensures the backward filter's consistently strong performance.  This suggests that a better algorithm may result from removing any dependence on the forward filter and running a backward filter (with $\xi_{n,\theta}(x_n) = \pi_{\theta}(x_n|y_{1:n-1})$) from time $T$ to $1$; this point is discussed in Section \ref{sec:discussion}. Note that, in comparison to the standard SMC estimate, the results for the new estimate (for this model and the current settings) were superior w.r.t.~the variability of the estimate (results not shown).

\subsection{Comparing the Two-filter Decomposition to FFBSi}
Remark \ref{remark2} above parallels a similar result shown in \cite{douc} for another $\mathcal{O}(N)$ SMC smoothing approximation based on FFBS.  To explore this point further, we used the same example from \cite{Fearnhead_2010} to compare the two-filter SMC algorithm to the FFBSi algorithm in \cite{douc}.  We used both algorithms to calculate the expected value of the state of the hidden process given $y_{1:T=300}$ at time $t\in\{T/10,\dots, 9T/10\}$.  Both algorithms utilized $N=300$ particles. Note that FFBSi relies on rejection sampling, and so due to its stochastic running time, it is difficult to exactly match the computation times. Again, $50$ simulations per algorithm per $(\tau^2,\nu^2)$ pair for $225$ pairs are run; see Figure \ref{fig:simos2}.  We found that the two algorithms gave very similar results, although the two-filter decomposition yielded estimates of lower variance (see Figure \ref{fig:simos2}).  This is especially true at lower values of $t$, where, as above, the backward filter has more influence on the two-filter estimate.

\begin{figure}[h]
\begin{center}
\centering
\scalebox{0.44}{\includegraphics[trim = 25mm 80mm 10mm 80mm, clip]{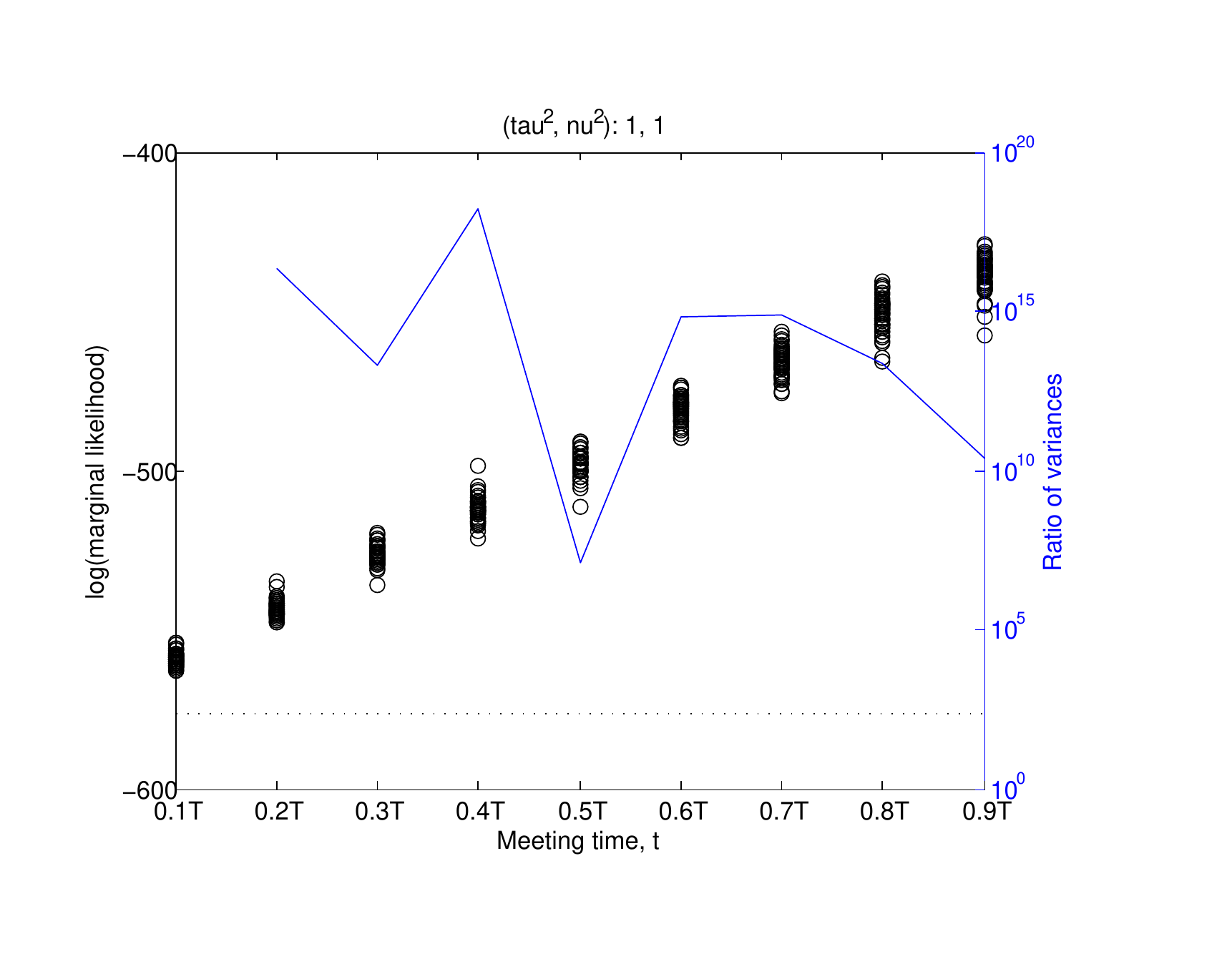}}
\scalebox{0.44}{\includegraphics[trim = 25mm 80mm 10mm 80mm, clip]{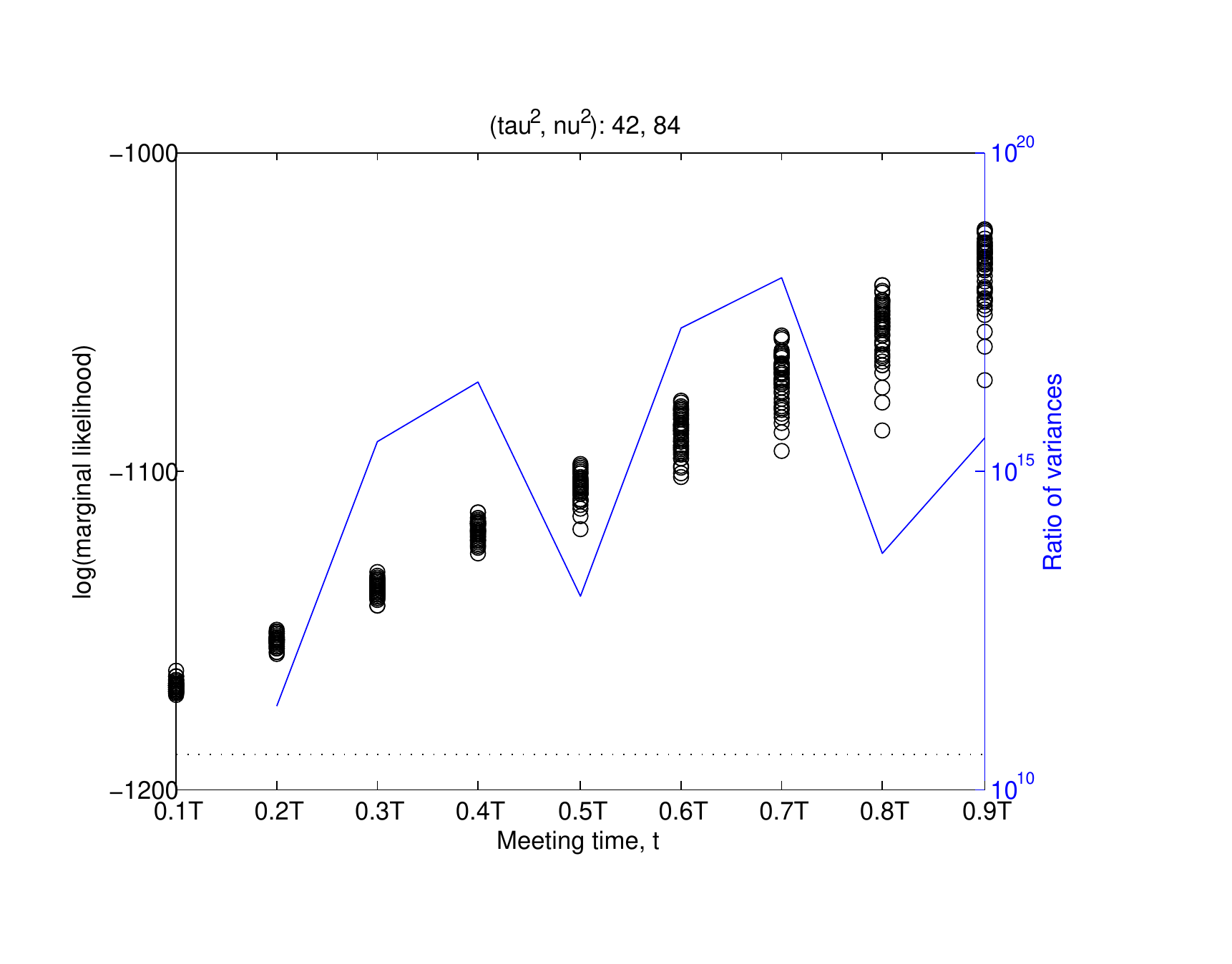}}
\scalebox{0.44}{\includegraphics[trim = 25mm 80mm 10mm 80mm, clip]{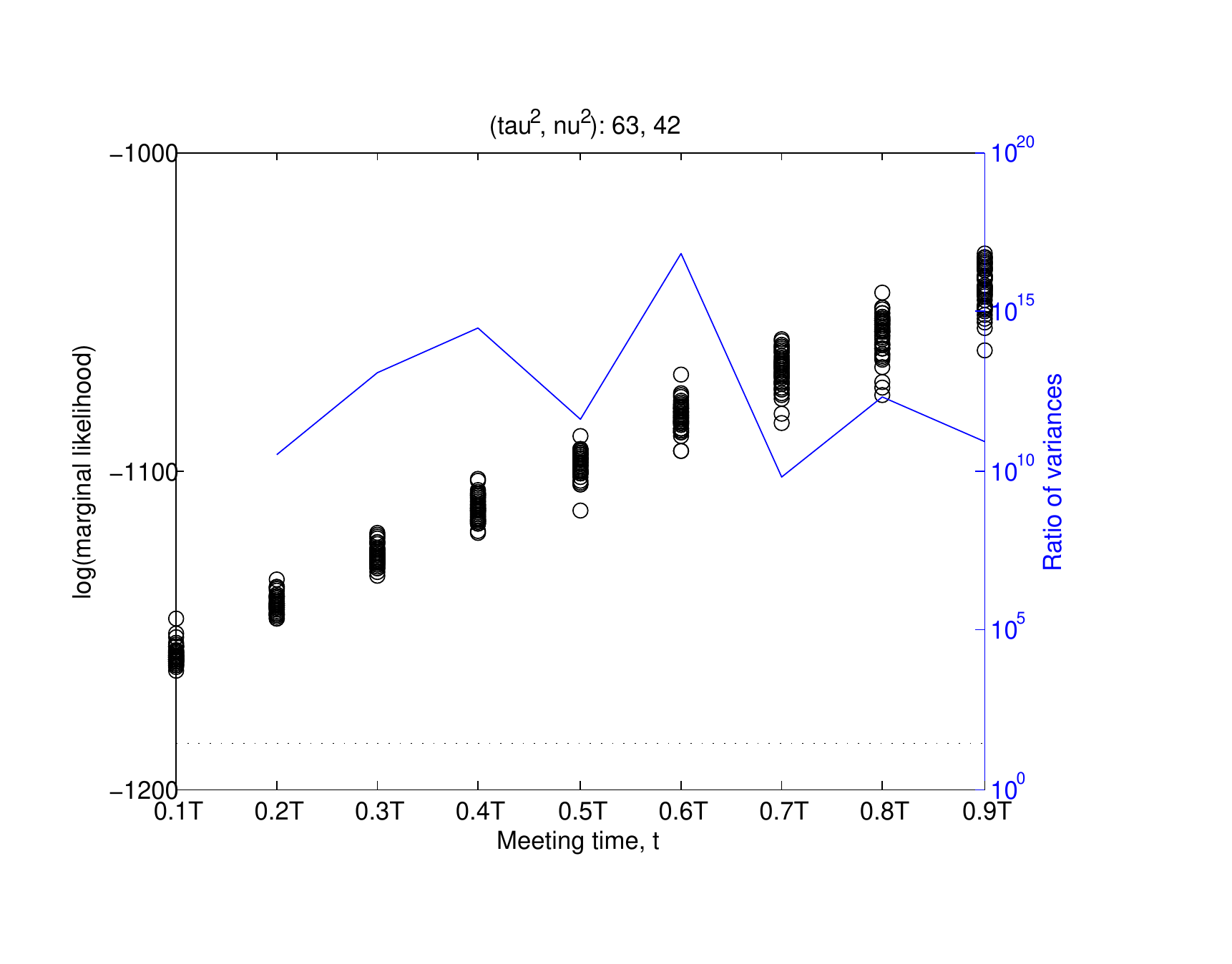}}
\scalebox{0.44}{\includegraphics[trim = 25mm 80mm 10mm 80mm, clip]{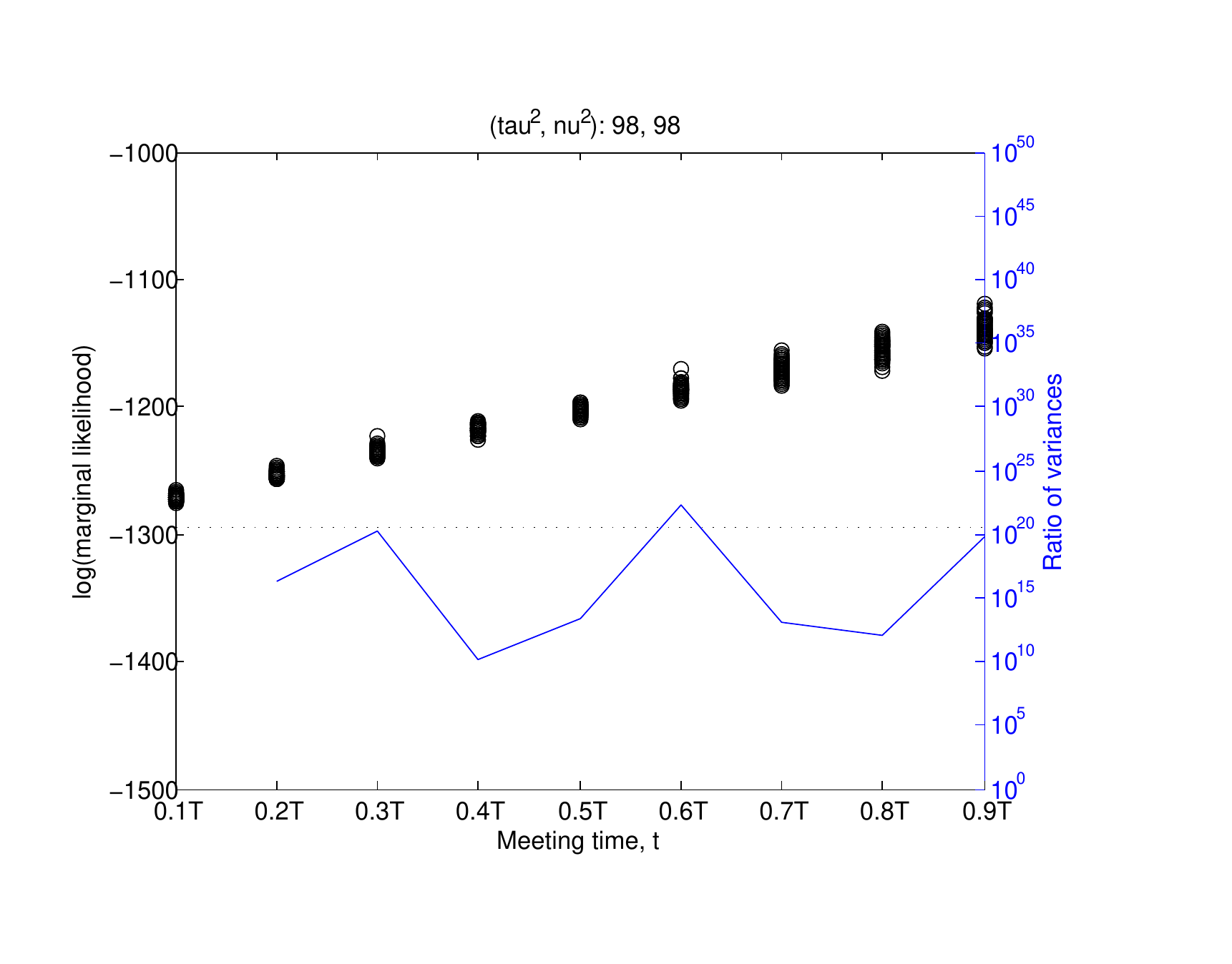}}
\caption{We present the output for some pairs of $\nu^2$ and $\tau^2$.  The circles, whose scale is on the left, give the $50$ simulated values of the logarithm of the marginal likelihood per time point.  The solid line, whose scale is on the right, measures the ratio of the variance of these $50$ values at each time point to the variance at the previous time point.  The dotted line gives the logarithm of the marginal likelihood as provided by the Kalman filter.}
\end{center}
\label{fig:simos1}
\end{figure}

\begin{figure}[h]
\begin{center}
\centering
\scalebox{0.44}{\includegraphics[trim = 25mm 80mm 10mm 80mm, clip]{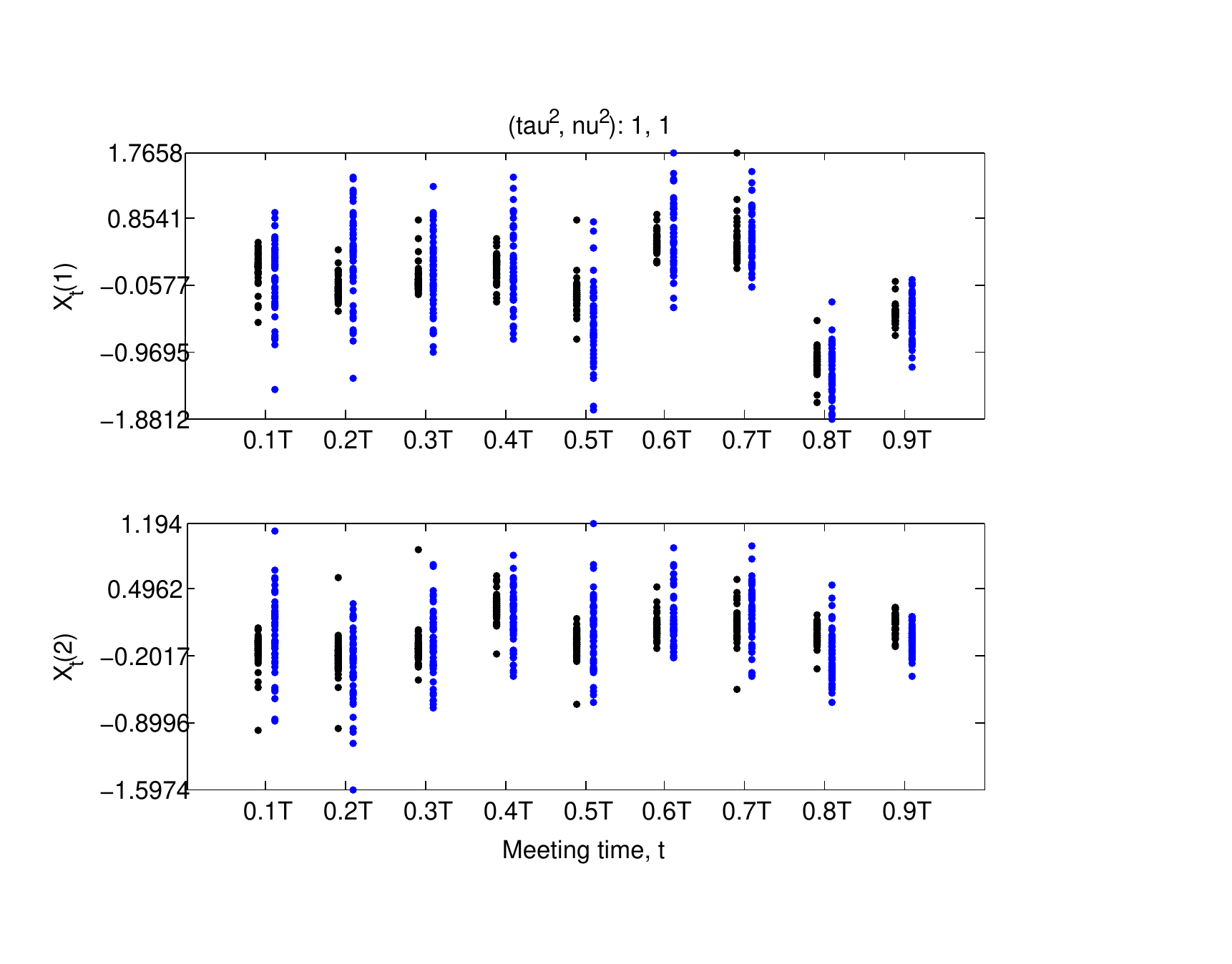}}
\scalebox{0.44}{\includegraphics[trim = 25mm 80mm 10mm 80mm, clip]{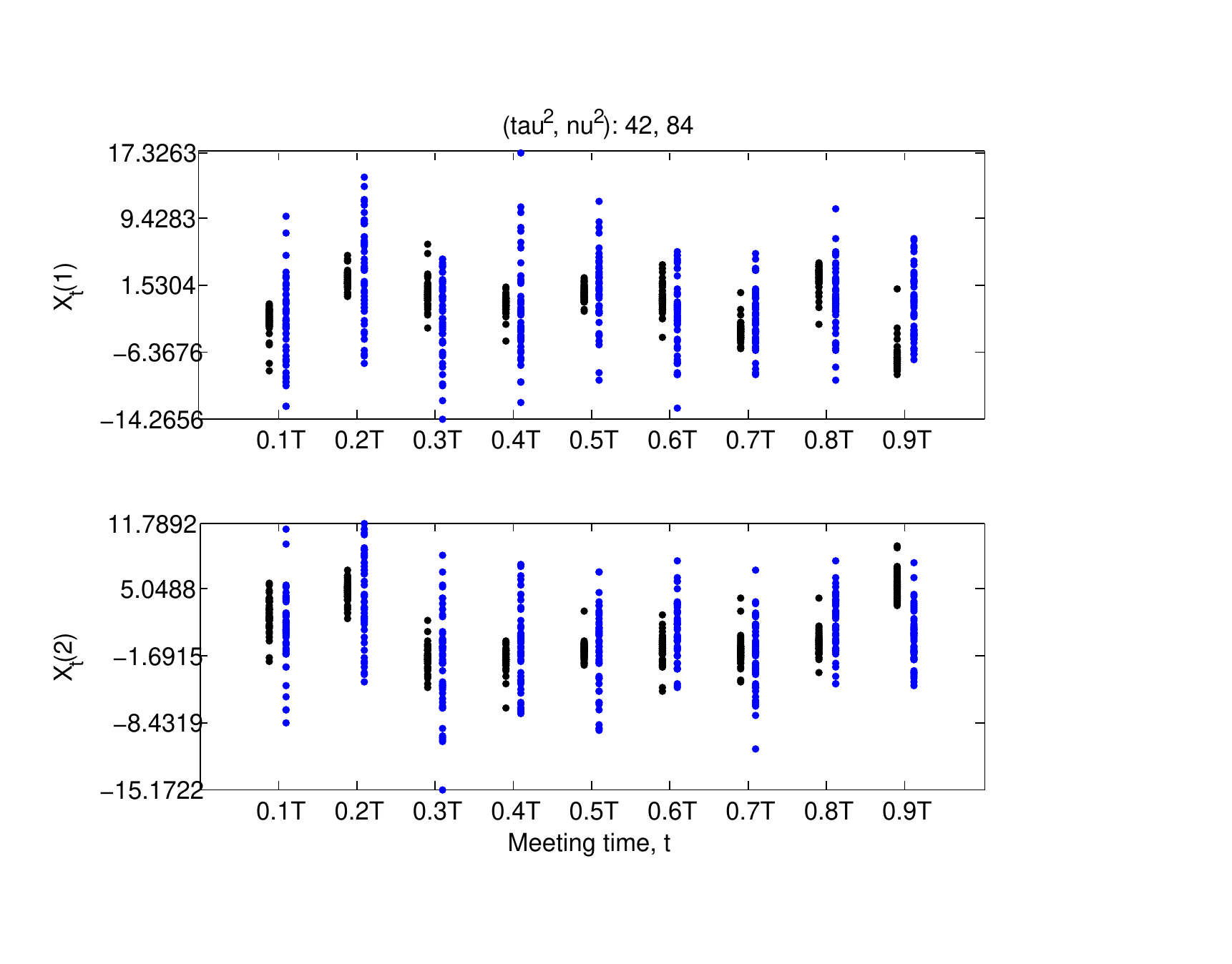}}
\scalebox{0.44}{\includegraphics[trim = 25mm 80mm 10mm 80mm, clip]{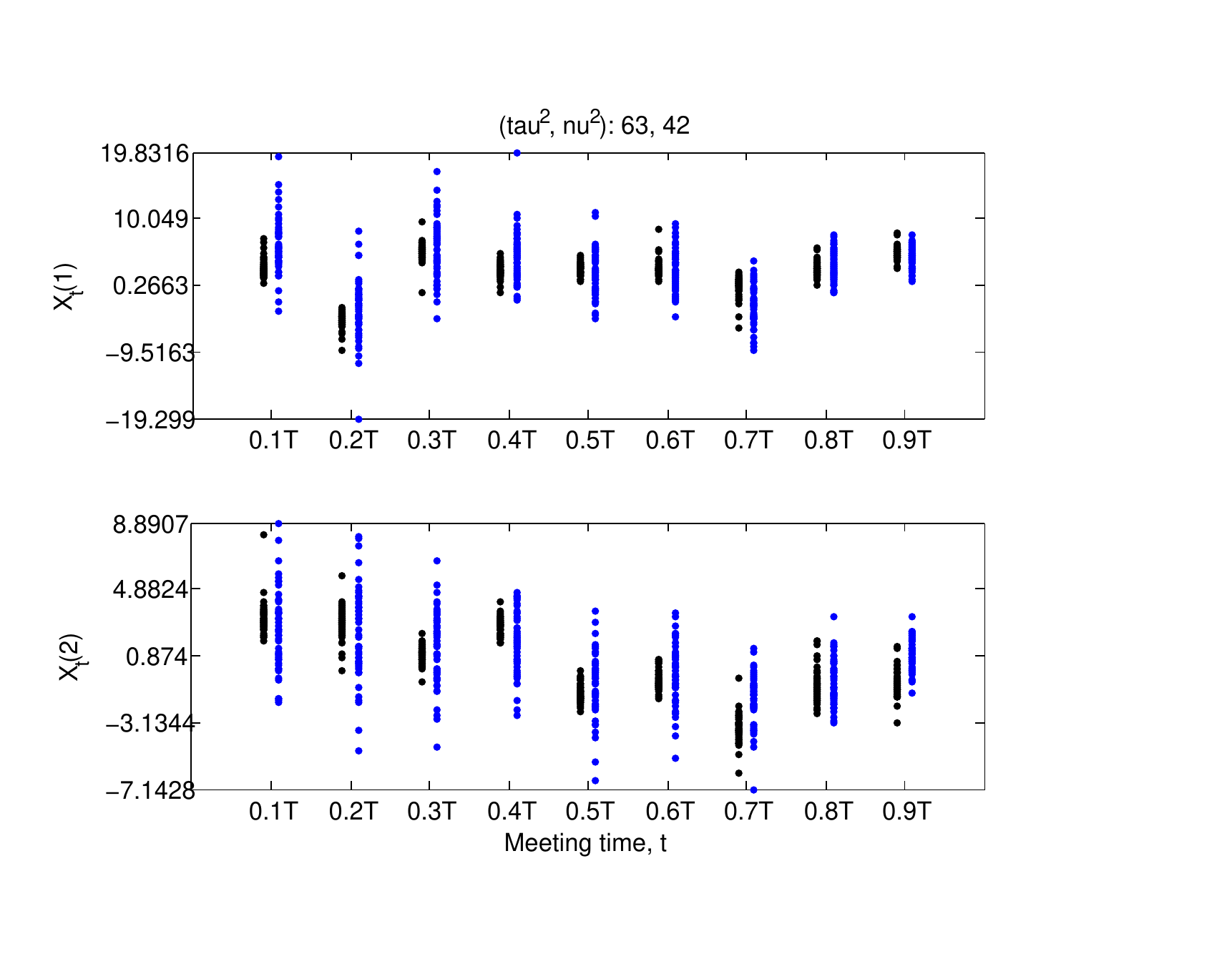}}
\scalebox{0.44}{\includegraphics[trim = 25mm 80mm 10mm 80mm, clip]{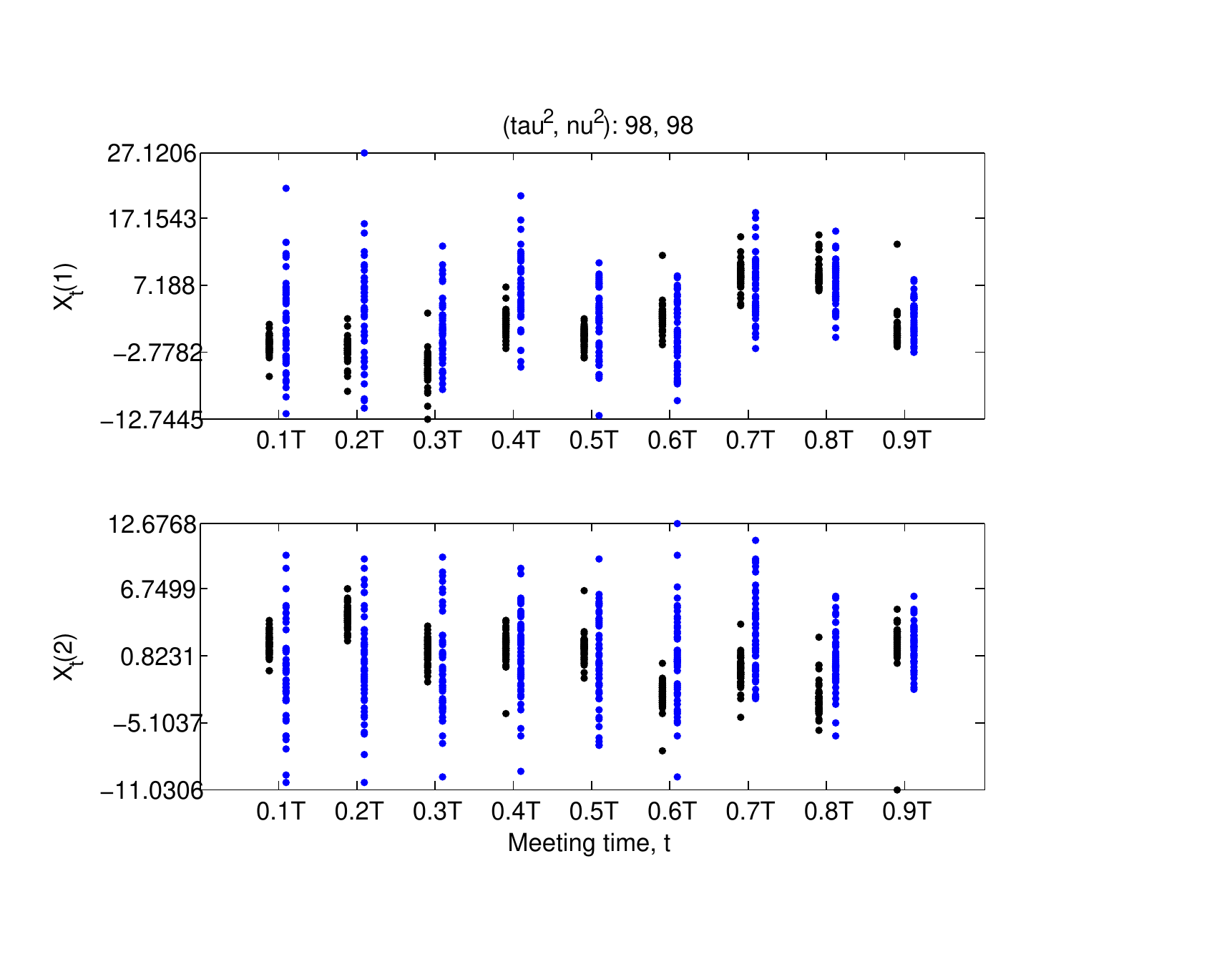}}
\caption{We present the output for some pairs of $\nu^2$ and $\tau^2$.  At each time point, the black dots (left) represent $50$ simulated expected values from the two-filter algorithm and the blue dots (right) represent $50$ estimates from FFBSi.  The first component of $\mathbb{E}[X_t|y_{1:T}]$ is on top, and the second component of $\mathbb{E}[X_t|y_{1:T}]$ is on the bottom.}
\end{center}
\label{fig:simos2}
\end{figure}

\section{Discussion}\label{sec:discussion}
In this note, we introduced a new $\mathcal{O}(N)$ estimate of the marginal likelihood using the generalized two-filter decomposition.  We established that this estimate is unbiased and proved a CLT, under some assumptions.  Numerical examples suggested that the new estimate is sensitive to changes in the meeting point of the forward and backward filters.  When choosing $\xi_{n,\theta}(x_n)=\pi_{\theta}(x_n|y_{1:T})$, the backward filter significantly outperforms the forward filter and it may be beneficial to remove the forward filter from the estimation procedure.  However, one can seldom make this choice for the $\xi_{n,\theta}$, and so we would like to approximate them.  In joint work with Prof.~A.~Doucet, we are exploring a smoothing algorithm where one introduces a discrete valued auxiliary variable $J\in \left\{1,...,N\right\} $ and considers the sequence of extended backward targets (where we condition upon the particles from a forward SMC algorithm)
$
\widetilde{\pi}_{\theta}(j, x_{n:T}|y_{1:T}) \propto f_\theta \left( \overrightarrow{x}_{n} \mid \overrightarrow{x}_{n-1}^j \right) g_{\theta}(y_n|x_n)
\bigg[\prod_{n=t+1}^T g_{\theta}(y_n|x_n)f_{\theta}(x_n|x_{n-1})\bigg] \quad n\in\{t,\dots,T\}.
$
The idea is to approximate the $\xi_{n,\theta}$ that are used above, via the forward filter.  

\subsubsection*{Acknowledgements}
This project has been initialized in joint research with Arnaud Doucet and we thank him for his input which has been critical.
We thank two referees and an associate editor, whose comments have greatly enhanced the article.

\appendix

\section{Proof of the CLT}

Here we describe a Feynman-Kac representation, which is used in the proof of the CLT. Let $t\in\{3,\dots,T-2\}$, with $T>2$ also fixed.

Define, the forward Feynman-Kac un-normalized $n-$time marginal, $n\in\{1,\dots,t-1\}$:
$$
\overrightarrow{\gamma}_{n,\theta}(dx_n) = \int \bigg[\prod_{p=1}^{n-1} \overrightarrow{W}_p(x_p)M_{p}(x_{p-1},dx_p)\bigg]M_{n}(x_{n-1},dx_n)
$$
with $x_p=(x_p',\tilde{x}_p)\in\mathbb{R}^{2d_x}$,
$M_1(x_{0},dx_0)=\delta_{x_0}(dx_1')q_{1,\theta}(\tilde{x}_1|x_1')d\tilde{x}_1$ and
\begin{eqnarray*}
M_{p}(x_{p-1},dx_p) & = & \delta_{\tilde{x}_{p-1}}(dx'_{p})q_{p,\theta}(\tilde{x}_p|x_{p}')d\tilde{x}_p.
\end{eqnarray*}
The normalized operator is 
$$
\overrightarrow{\eta}_{n,\theta}(dx_n) = \overrightarrow{\gamma}_{n,\theta}(dx_n)/\overrightarrow{\gamma}_{n,\theta}(1).
$$
We also define the forward semi-group operator:
$$
\overrightarrow{Q}_{p,n}(x_p,dx_n) = \int \prod_{q=p}^{n-1} \overrightarrow{W}_q(x_q) M_{q+1}(x_q,dx_{q+1})
$$
with $1\leq p \leq n\leq t-1$. 
 The selection mutation operator:
\begin{eqnarray*}
\overrightarrow{\Phi}_q(\overrightarrow{\eta}_{q-1,\theta})(\cdot) & = & \frac{\overrightarrow{\eta}_{q-1,\theta}(\overrightarrow{W}_{q-1}M_q(\cdot))}
{\overrightarrow{\eta}_{q-1,\theta}(\overrightarrow{W}_{q-1})} \quad q\in\{0,\dots,t-1\}
\end{eqnarray*}
with the conventions $\overrightarrow{\Phi}_1(\overrightarrow{\eta}_{0,\theta})=\overrightarrow{\eta}_{1,\theta}$.

Define, the  backward Feynman-Kac un-normalized $n-$time marginal, $n\in\{0,\dots,T-t-1\}$:
$$
\overleftarrow{\gamma}_{T-n,\theta}(dx_n) = \int \bigg[\prod_{p=0}^{T-n-1} \overleftarrow{W}_{T-p}(x_{T-p})M_{T-p}(x_{T-p+1},dx_{T-p})\bigg]M_{n}(x_{n+1},dx_n)
$$
with  $x_n=(x_n',\tilde{x}_n)\in\mathbb{R}^{2d_x}$,  $M_T(d\tilde{x}_T)  =  q_T(\tilde{x}_T)d\tilde{x}_T \delta_x(dx_T') $, $x\in\mathbb{R}^{d_x}$ an arbitrary point
\begin{eqnarray*}
M_n(x_{n+1},dx_n)  & = & q_{n,\theta}(\tilde{x}_n|x_n')dx_n\delta_{\tilde{x}_{n+1}}(dx_{n}') \quad n\in\{t+1,\dots,T-1\}.
\end{eqnarray*}
The normalized operator $\overleftarrow{\eta}_{T-n,\theta}=\overleftarrow{\gamma}_{T-n,\theta}(dx_n) /\overleftarrow{\gamma}_{T-n,\theta}(1) $.
Also define the semi-group operator
$$
\overleftarrow{Q}_{p,n}(x_p,dx_n) = \int \prod_{s=0}^{p-n-1} \overleftarrow{W}_{p-s}(x_{p-s}) M_{p-s-1}(x_{p-s},dx_{p-s-1})
$$
with $T\geq p \geq n \geq t+1$. Also
\begin{eqnarray*}
\overleftarrow{\Phi}_{T-q}(\overleftarrow{\eta}_{T-q+1,\theta})(\cdot) & = & \frac{\overleftarrow{\eta}_{T-q+1,\theta}(\overleftarrow{W}_{T-q+1}M_{T-q}(\cdot))}
{\overleftarrow{\eta}_{T-q+1,\theta}(\overleftarrow{W}_{T-q+1})} \quad q\in\{0,\dots,T-t-1\}
\end{eqnarray*}
and $\overleftarrow{\Phi}_{T}(\overleftarrow{\eta}_{T+1})=\overleftarrow{\eta}_{T}$.

We will use the notation
\begin{eqnarray*}
I_{gf}(\tilde{x}_{t-1},\tilde{x}_{t+1}) & = & \int_{\mathbb{R}^{d_x}} g_{\theta}(y_t|x_t) f_{\theta}(\tilde{x}_{t+1}|x_t)f_{\theta}(x_t|\tilde{x}_{t-1}) dx_t\\
W_{t+1}^{\xi}(x_{t+1}) & = & \frac{\overleftarrow{W}_{t+1}(x_{t+1})}{\xi_{t+1,\theta}(\tilde{x}_{t+1})}
\end{eqnarray*}
with
\begin{eqnarray*}
\mu_{t-1}(\overrightarrow{W}_{t-1}
\overleftarrow{\gamma}_{t+1,\theta}[\overleftarrow{W}_{t+1}^{\xi}
I_{gf}(.,\cdot)]) & = & \int \mu_{t-1}(dx_{t-1})\overrightarrow{W}_{t-1}(x_{t-1})[\int \overleftarrow{\gamma}_{t+1,\theta}(dx_{t+1}) 
\overleftarrow{W}_{t+1}^{\xi}(x_{t+1})
I_{gf}(\tilde{x}_{t-1},\tilde{x}_{t+1})]\\
\mu_{t+1}(\overleftarrow{W}_{t+1}^{\xi}\overrightarrow{\gamma}_{t-1,\theta}(\overrightarrow{W}_{t-1}I_{gf}(\cdot,.))) & = &
\int \mu_{t+1}(dx_{t+1}) \overleftarrow{W}_{t+1}^{\xi}(x_{t+1}) [\int 
\overrightarrow{\gamma}_{t-1,\theta}(dx_{t-1})\overrightarrow{W}_{t-1}(x_{t-1})I_{gf}(\tilde{x}_{t-1},\tilde{x}_{t+1})]
\end{eqnarray*}
for $\sigma-$finite measures $\mu_{t-1},\mu_{t+1}$.

Using the above notations, we can write
$$
p_{\theta}^N(y_{1:T}) =\overrightarrow{\gamma}_{t-1}^N(1)\overleftarrow{\gamma}_{t+2}^N(1)\frac{1}{N}\sum_{l=1}^N 
\frac{\overrightarrow{W}_{t-1}(\overrightarrow{x}^{i(l)}_{t-1})\overleftarrow{W}_{t+1}(\overleftarrow{x}^{j(l)}_{t+1}) 
f_{\theta}(x_t^l|\overrightarrow{\tilde{x}}_{t-1}^{i(l)})f_{\theta}(\overleftarrow{\tilde{x}}_{t+1}^{j(l)}|x_{t}^l)}{N^2 \xi_{t+1,\theta}(\overleftarrow{\tilde{x}}^j_{t+1})
\overrightarrow{\beta}_{t-1}^{i(l)}\overleftarrow{\beta}_{t+1}^{j(l)}
q_{t,\theta}(x_t^l|\overrightarrow{\tilde{x}}_{t-1}^{i(l)},\overleftarrow{\tilde{x}}_{t+1}^{j(l)})}g_{\theta}(y_t|x_t^l)
$$
with
\begin{eqnarray*}
\overrightarrow{\gamma}_{t-1}^N(1)& = & \prod_{p=1}^{t-2} \frac{1}{N}\sum_{i=1}^N \overrightarrow{W}_p(\overrightarrow{x}_p^i)\\
\overleftarrow{\gamma}_{t+2}^N(1)& = & \prod_{p=0}^{T-t-2} \frac{1}{N}\sum_{i=1}^N \overrightarrow{W}_{T-p}(\overrightarrow{x}_{T-p}^i).
\end{eqnarray*}

To prove the central limit theorem (CLT), we make use of the following assumption, which is similar to $(H)_m$ ($m=2$) of \cite{cerou1}. It is used to control remainder terms, when constructing a CLT. It implies that the backward Markov proposal kernels mix very quickly.
\begin{hypA} 
\label{hyp:A}
\begin{enumerate}
\item{The incremental weights all satisfy:
$$
\forall 1\leq n \leq t-1 \quad \delta_{\theta} = \sup_{x,y}\frac{\overrightarrow{W}_n(x)}{\overrightarrow{W}_n(y)} <\infty \quad
\forall t+1\leq n \leq T \quad \delta_{\theta} = \sup_{x,y}\frac{\overleftarrow{W}_n(x)}{\overleftarrow{W}_n(y)} <\infty
$$
For each $\theta\in\Theta$ there exist $0<\underline{C}_{\theta}<\overline{C}_{\theta}<\infty$ such that for every $x,x'\in\mathbb{R}^{d_x}$, $n\in\{1,\dots,T\}$, $y_n\in\mathbb{R}^{d_y}$
$$
\underline{C}_{\theta}\leq f_{\theta}(x'|x) \leq \overline{C}_{\theta} \quad  \underline{C}_{\theta}\leq \xi_{n,\theta}(x) \leq \overline{C}_{\theta} \quad \underline{C}_{\theta} \leq g_{\theta}(y_n|x) \leq \overline{C}_{\theta}.
$$
In addition, for each $\theta\in\Theta$, there exist $0<\underline{C}_{\theta}<\overline{C}_{\theta}<\infty$ as above, such that for each $x_t,x_{t-1},x_t\in\mathbb{R}^{d_x}$, $i\in\{1,\dots,N\}$
$$
\underline{C}_{\theta}\leq q_{t,\theta}(x_t|x_{t-1},x_{t+1}) \leq \overline{C}_{\theta} \quad \underline{C}_{\theta} \leq \overrightarrow{\beta}_{t-1}^i \leq  \overline{C}_{\theta}
\quad
\underline{C}_{\theta} \leq \overleftarrow{\beta}_{t+1}^i \leq  \overline{C}_{\theta}.
$$
}
\item{For $m=2$ and some sequence of numbers $\omega_p^{(m)}\in[1,\infty)$ such that for each $p\in\{-1,\dots,T-t-m\}$ and any $(x,x')\in \mathbb{R}^{2d_x}$ we have
$$
M_{T-p,T-p-m}(x,dy) \leq \omega_p^{(m)} M_{T-p,T-p-m}(x',dy)
$$
where $M_{p,q} = M_{p-1}\dots M_{q}$, $p\geq q$.}
\end{enumerate}
\end{hypA}

\begin{proof}[Proof of Theorem \ref{theo:clt}]
We have that:
$$
\mathbb{E}[p_{\theta}^N(y_{1:T})|\overrightarrow{\mathscr{F}}_{t-1}^N\otimes\overleftarrow{\mathscr{F}}_{t+1}^N] = \overrightarrow{\gamma}_{t-1}^N\otimes\overleftarrow{\gamma}_{t+1}^N(\overrightarrow{W}_{t-1}\overleftarrow{W}_{t+1}^{\xi}I_{gf})
$$
where $\overrightarrow{\mathscr{F}}_{t-1}^N$ and $\overleftarrow{\mathscr{F}}_{t+1}^N$ are the filtrations
generated by the forward and backward particle systems up-to time $t-1$ and $t+1$ respectively.
We can use the decomposition of \cite{delmoral} to obtain the following formula:
$$
\mathbb{E}[p_{\theta}^N(y_{1:T})|\overrightarrow{\mathscr{F}}_{t-1}^N\otimes\overleftarrow{\mathscr{F}}_{t+1}^N] - p_{\theta}(y_{1:T}) = \alpha(N) + \beta(N) + R(N)
$$
where
\begin{eqnarray*}
\alpha(N) & = & \sum_{q=1}^{t-1} \overrightarrow{\gamma}_{q}^N(1)[\overrightarrow{\eta}_q^N-\overrightarrow{\Phi}_{q}(\overrightarrow{\eta}_{q-1}^N)]
(\overrightarrow{Q}_{q,t-1}[\overrightarrow{W}_{t-1}\overleftarrow{\gamma}_{t+1}(\overleftarrow{W}_{t+1}^{\xi}I_{gf}(.,\cdot))])\\
\beta(N) & = &
\sum_{q=0}^{T-t-1} \overleftarrow{\gamma}_{T-q}^N(1)[\overleftarrow{\eta}_{T-q}^N-\overleftarrow{\Phi}_{T-q}(\overleftarrow{\eta}_{T-q-1}^N)]
(\overleftarrow{Q}_{T-q,t}[W_{t+1}^{\xi}\overrightarrow{\gamma}_{t-1}(W_{t-1}I_{gf}(\cdot,.))])\\
R(N) & = & \sum_{q=1}^{t-1} \overrightarrow{\gamma}_{q}^N(1)[\overrightarrow{\eta}_q^N-\overrightarrow{\Phi}_{q}(\overrightarrow{\eta}_{q-1}^N)]
(\overrightarrow{Q}_{q,t-1}[\overrightarrow{W}_{t-1}[\overleftarrow{\gamma}_{t+1}^N-\overleftarrow{\gamma}_{t+1,\theta}](\overleftarrow{W}_{t+1}^{\xi}I_{gf}(.,\cdot))]).
\end{eqnarray*}
It is straightforward to verify that the expectation of this quantity is exactly zero, which establishes the unbiased property. 

By using the  Marcinicwiez-Zygmund inequality
$$
\mathbb{E}[|\sqrt{N}(p_{\theta}^N(y_{1:T})-\mathbb{E}[p_{\theta}^N(y_{1:T})|\overrightarrow{\mathscr{F}}_{t-1}^N\otimes\overleftarrow{\mathscr{F}}_{t+1}^N]) |]
\leq \frac{C_{\theta}}{N^2}\mathbb{E}[|\overrightarrow{\gamma}_{t-1}^N(1)\overleftarrow{\gamma}_{t+2}^N(1)|]
$$
for some $C_{\theta}<+\infty$. For any fixed $t,T$, $\sup_{N\geq 1}\mathbb{E}[\overrightarrow{\gamma}_{t-1}^N(1)^2]^{1/2}<\infty$ and $\sup_{N\geq 1}\mathbb{E}[\overleftarrow{\gamma}_{t+2}^N(1)^2]^{1/2}<\infty$ (see the proof of Lemma \ref{lem:techlem}), thus, via Cauchy-Schwarz, we can deduce that (note that $\rightarrow_{\mathbb{P}}$ denotes convergence in probability)
$$
\sqrt{N}(p_{\theta}^N(y_{1:T})-\mathbb{E}[p_{\theta}^N(y_{1:T})|\overrightarrow{\mathscr{F}}_{t-1}^N\otimes\overleftarrow{\mathscr{F}}_{t+1}^N]) \rightarrow_{\mathbb{P}} 0.
$$

The weak convergence of $\sqrt{N}\alpha(N)$ and $\sqrt{N}\beta(N)$ can be obtained by the independence of the terms and \cite[Proposition 9.4.1]{delmoral}. By Lemma \ref{lem:techlem} the remainder $\sqrt{N}R(N)$ converges to zero in probability and we can conclude
the result.
\end{proof}

\begin{lem}\label{lem:techlem}
Assume (A1). Then for fixed $T> 2$, $t\in\{3,\dots,T-2\}$ we have that
$$
\sqrt{N}R(N)  =  \sqrt{N}\sum_{q=1}^{t-1} \overrightarrow{\gamma}_{q}^N(1)[\overrightarrow{\eta}_q^N-\overrightarrow{\Phi}_{q}(\overrightarrow{\eta}_{q-1}^N)]
(\overrightarrow{Q}_{q,t-1}[\overrightarrow{W}_{t-1}[\overleftarrow{\gamma}_{t+1}^N-\overleftarrow{\gamma}_{t+1,\theta}](\overleftarrow{W}_{t+1}^{\xi}I_{gf}(.,\cdot))])
\rightarrow_{\mathbb{P}} 0.
$$
\end{lem}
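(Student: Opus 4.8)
The plan is to exploit the (conditional) independence of the forward and backward particle systems, together with the observation that every summand of $R(N)$ is, up to a bounded factor, a \emph{product} of a forward empirical fluctuation of order $N^{-1/2}$ and a backward empirical fluctuation of order $N^{-1/2}$; hence each summand is $O_{\mathbb{P}}(N^{-1})$ and $\sqrt{N}R(N)=O_{\mathbb{P}}(N^{-1/2})\to_{\mathbb P}0$. Concretely, for fixed $q\in\{1,\dots,t-1\}$ write
$$
\psi_q^N(\tilde{x}_{t-1}) \;=\; [\overleftarrow{\gamma}_{t+1}^N-\overleftarrow{\gamma}_{t+1,\theta}]\big(\overleftarrow{W}_{t+1}^{\xi}\,I_{gf}(\tilde{x}_{t-1},\cdot)\big), \qquad \varphi_q^N \;=\; \overrightarrow{Q}_{q,t-1}\big(\overrightarrow{W}_{t-1}\,\psi_q^N\big),
$$
so the $q$-th term of $R(N)$ is $\overrightarrow{\gamma}_q^N(1)\,[\overrightarrow{\eta}_q^N-\overrightarrow{\Phi}_{q}(\overrightarrow{\eta}_{q-1}^N)](\varphi_q^N)$. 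Since the two SMC runs are independent, $\varphi_q^N$ is $\overleftarrow{\mathscr{F}}_{t+1}^N$-measurable and is, from the viewpoint of the forward randomness, a fixed bounded function.

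First I would record the standard facts that hold under (A1) for fixed $T,t$: because the forward and backward incremental weights (and $f_\theta,g_\theta,\xi_{n,\theta},q_{t,\theta},\overrightarrow{\beta},\overleftarrow{\beta}$) are uniformly bounded above and below, the $L^p$-moment bounds of \cite[Proposition 9.4.1]{delmoral} give $\sup_{N\ge1}\mathbb{E}[\overrightarrow{\gamma}_{q}^N(1)^p]<\infty$, $\sup_{N\ge1}\mathbb{E}[\overleftarrow{\gamma}_{t+1}^N(1)^p]<\infty$ for all $p\ge1$ (these also supply the $L^2$ bounds invoked in the proof of Theorem \ref{theo:clt}), and $\sup_{N\ge1}N\,\mathbb{E}[|\overleftarrow{\gamma}_{t+1}^N(1)-\overleftarrow{\gamma}_{t+1,\theta}(1)|^2]<\infty$. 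Next, conditioning on $\overleftarrow{\mathscr{F}}_{t+1}^N$ and applying the conditional Marcinkiewicz--Zygmund inequality to the forward fluctuation (exactly as in the proof of Theorem \ref{theo:clt}), and using that under (A1) both multiplication by $\overrightarrow{W}_{t-1}$ and the semigroup $\overrightarrow{Q}_{q,t-1}$ are bounded operators on $\mathcal{B}_b(\mathbb{R}^{2d_x})$ (finitely many bounded weights, $t$ fixed), I obtain
$$
\mathbb{E}\Big[\big|\overrightarrow{\gamma}_q^N(1)\,[\overrightarrow{\eta}_q^N-\overrightarrow{\Phi}_{q}(\overrightarrow{\eta}_{q-1}^N)](\varphi_q^N)\big|^2 \,\Big|\, \overleftarrow{\mathscr{F}}_{t+1}^N\Big] \;\le\; \frac{C_\theta}{N}\,\|\varphi_q^N\|_\infty^2 \;\le\; \frac{C_\theta'}{N}\,\sup_{\tilde{x}_{t-1}}\big|\psi_q^N(\tilde{x}_{t-1})\big|^2 .
$$
It then remains to prove a backward estimate of the form $\mathbb{E}\big[\sup_{\tilde{x}_{t-1}}|\psi_q^N(\tilde{x}_{t-1})|^2\big]\le C_\theta''/N$.

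For this I would split $\overleftarrow{\gamma}_{t+1}^N-\overleftarrow{\gamma}_{t+1,\theta}=\overleftarrow{\gamma}_{t+1}^N(1)\,(\overleftarrow{\eta}_{t+1}^N-\overleftarrow{\eta}_{t+1,\theta})+(\overleftarrow{\gamma}_{t+1}^N(1)-\overleftarrow{\gamma}_{t+1,\theta}(1))\,\overleftarrow{\eta}_{t+1,\theta}$. Under (A1) the family $\{x_{t+1}\mapsto \overleftarrow{W}_{t+1}^{\xi}(x_{t+1})I_{gf}(\tilde{x}_{t-1},\tilde{x}_{t+1})\}_{\tilde{x}_{t-1}}$ is uniformly bounded (using $\underline{C}_\theta\le f_\theta,g_\theta,\xi_{n,\theta}\le\overline{C}_\theta$ and $\int f_\theta(x_t|\tilde{x}_{t-1})dx_t=1$), so the second piece contributes at most $C_\theta\,|\overleftarrow{\gamma}_{t+1}^N(1)-\overleftarrow{\gamma}_{t+1,\theta}(1)|$, uniformly in $\tilde{x}_{t-1}$, which is $O_{L^2}(N^{-1/2})$ by the bound recorded above. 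For the first piece, Cauchy--Schwarz against the $L^p$-bound on $\overleftarrow{\gamma}_{t+1}^N(1)$ reduces matters to showing $\mathbb{E}\big[\sup_{\tilde{x}_{t-1}}|(\overleftarrow{\eta}_{t+1}^N-\overleftarrow{\eta}_{t+1,\theta})(\overleftarrow{W}_{t+1}^{\xi}I_{gf}(\tilde{x}_{t-1},\cdot))|^{p'}\big]\le C_\theta/N^{p'/2}$ for a suitable $p'>1$; here I would invoke the $L^p$-bounds for the \emph{backward} particle approximation, where part 2 of (A1) (the $m=2$ mixing condition on $M_{T-p,T-p-2}$, modelled on $(H)_2$ of \cite{cerou1}) is precisely what closes the Del Moral telescoping estimate with non-degrading constants, and where the sup over $\tilde{x}_{t-1}$ is handled by combining the pointwise bound with the (Lipschitz-type, from the boundedness in (A1)) modulus of continuity of $\tilde{x}_{t-1}\mapsto I_{gf}(\tilde{x}_{t-1},\cdot)$ in sup-norm. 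Assembling the two pieces gives $\mathbb{E}[\sup_{\tilde{x}_{t-1}}|\psi_q^N|^2]\le C_\theta''/N$; then, by conditional Jensen on the display above, $\mathbb{E}[|\overrightarrow{\gamma}_q^N(1)\,[\overrightarrow{\eta}_q^N-\overrightarrow{\Phi}_{q}(\overrightarrow{\eta}_{q-1}^N)](\varphi_q^N)|]\le C_\theta/N$, whence $\mathbb{E}[|\sqrt{N}R(N)|]\le\sum_{q=1}^{t-1}\sqrt{N}\,C_\theta/N\to 0$, giving $L^1$- and hence $\mathbb P$-convergence.

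The main obstacle is the last backward step: obtaining the $N^{-1/2}$ rate for $\sup_{\tilde{x}_{t-1}}|(\overleftarrow{\eta}_{t+1}^N-\overleftarrow{\eta}_{t+1,\theta})(\overleftarrow{W}_{t+1}^{\xi}I_{gf}(\tilde{x}_{t-1},\cdot))|$ \emph{uniformly} in the spatial index $\tilde{x}_{t-1}$, rather than merely pointwise. This is the only place where a genuinely non-routine argument is needed, and it is exactly what the mixing assumption in part 2 of (A1) is there to support; everything else is bookkeeping with uniformly bounded weights over finitely many time steps.
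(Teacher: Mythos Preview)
Your overall architecture matches the paper's: show $\sqrt{N}R(N)\to 0$ in $L^1$ by exhibiting each summand as a product of a forward fluctuation and a backward fluctuation, each of size $N^{-1/2}$, using Cauchy--Schwarz/conditioning to decouple them and the Marcinkiewicz--Zygmund inequality on the forward side. That part is fine and is exactly what the paper does.

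The divergence --- and the gap --- is precisely at the step you flag as the ``main obstacle'': controlling $\mathbb{E}\big[\sup_{\tilde{x}_{t-1}}|\psi_q^N(\tilde{x}_{t-1})|^2\big]$. Your proposed fix, a ``Lipschitz-type modulus of continuity of $\tilde{x}_{t-1}\mapsto I_{gf}(\tilde{x}_{t-1},\cdot)$'' derived ``from the boundedness in (A1)'', does not follow from (A1): the assumption only gives two-sided bounds on $f_\theta,g_\theta,\xi_{n,\theta}$, not any smoothness, and the index set $\tilde{x}_{t-1}\in\mathbb{R}^{d_x}$ is non-compact, so a chaining/continuity argument has no traction here. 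As written, that step is unjustified.

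The paper sidesteps the supremum entirely. Instead of bounding $\|\varphi_q^N\|_\infty$ by $C\sup_{\tilde{x}_{t-1}}|\psi_q^N(\tilde{x}_{t-1})|$, it normalises by
\[
\bar{\xi}_{q,t-1}^N \;=\; \sup_x \overrightarrow{W}_{t-1}(x)\,\sup_x \overrightarrow{Q}_{q,t-1}\big(|\psi_q^N|\big)(x),
\]
which is a supremum over the \emph{starting point} $x_q$ of an \emph{integral} in $\tilde{x}_{t-1}$ (via the semigroup $\overrightarrow{Q}_{q,t-1}$). Under (A1) this $\sup_x$ can be replaced by a fixed finite constant times an average against a probability kernel, after which Jensen's inequality and Fubini push the expectation inside the $\tilde{x}_{t-1}$-integral. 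One is then left with the \emph{pointwise} bound
\[
\sup_{\tilde{x}_{t-1}}\mathbb{E}\big[|[\overleftarrow{\gamma}_{t+1}^N-\overleftarrow{\gamma}_{t+1,\theta}](\overleftarrow{W}_{t+1}^{\xi}I_{gf}(\tilde{x}_{t-1},\cdot))|^2\big]\;\le\;\frac{C(T,t)}{N},
\]
which is exactly what \cite[Theorem 5.1, Corollary 5.2]{cerou1} provides under (A1), and for which no uniformity over $\tilde{x}_{t-1}$ inside the expectation is needed. Replacing your Lipschitz step by this Jensen+Fubini reduction closes the argument; everything else in your proposal is correct.
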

\begin{proof}
To shorten the subsequent notations, define:
\begin{eqnarray*}
\xi_{q,t-1}^N(x) & = & \overrightarrow{Q}_{q,t-1}[\overrightarrow{W}_{t-1}[\overleftarrow{\gamma}_{t+1}^N-\overleftarrow{\gamma}_{t+1,\theta}](\overleftarrow{W}_{t+1}^{\xi}I_{gf}(.,\cdot)](x)\\
\bar{\xi}_{q,t-1}^N & = & \sup_x \overrightarrow{W}_{t-1}(x) \sup_x \overrightarrow{Q}_{q,t-1}(|[\overleftarrow{\gamma}_{t+1}^N-\overleftarrow{\gamma}_{t+1,\theta}](\overleftarrow{W}_{t+1}^{\xi}I_{gf}(.,\cdot)|)(x).
\end{eqnarray*}
It is remarked that for any bounded function $\varphi$, $\sup_x \overrightarrow{Q}_{q,t-1}(\varphi)(x)<\infty$ by assumption.

We will now show that $\sqrt{N}R(N)$ will go-to zero in $\mathbb{L}_1$. To that end, we can consider the expectation of each summand in the series for $R(N)$.
We have
$$
\mathbb{E}\Big[\Big|\overrightarrow{\gamma}_{q}^N(1)[\overrightarrow{\eta}_q^N-\overrightarrow{\Phi}_{q}(\overrightarrow{\eta}_{q-1}^N)]\Big(\frac{\xi_{q,t-1}^N}{\bar{\xi}_{q,t-1}^N}\Big) \bar{\xi}_{q,t-1}^N\Big|\Big]
\leq \mathbb{E}\Big[\Big|\overrightarrow{\gamma}_{q}^N(1)[\overrightarrow{\eta}_q^N-\overrightarrow{\Phi}_{q}(\overrightarrow{\eta}_{q-1}^N)]\Big(\frac{\xi_{q,t-1}^N}{\bar{\xi}_{q,t-1}^N}\Big)\Big|^2\Big]^{1/2}
\mathbb{E}[(\bar{\xi}_{q,t-1}^N)^2]^{1/2}
$$
where we have used Cauchy-Schwarz. For the first expectation on the R.H.S.~one can condition on $\overrightarrow{\mathscr{F}}_{q-1}^N\otimes \overleftarrow{\mathscr{F}}_{t+1}^N$ and apply the Marcinicwiez-Zygmund inequality
(noting that $\sup_x|\xi_{q,t-1}^N(x)|/\bar{\xi}_{q,t-1}^N$ is upper-bounded by a finite deterministic constant) to obtain that
$$
\mathbb{E}\Big[\Big|\overrightarrow{\gamma}_{q}^N(1)[\overrightarrow{\eta}_q^N-\overrightarrow{\Phi}_{q}(\overrightarrow{\eta}_{q-1}^N)]\Big(\frac{\xi_{q,t-1}^N}{\bar{\xi}_{q,t-1}^N}\Big)\Big|^2\Big]^{1/2}
\leq \frac{C}{\sqrt{N}}\mathbb{E}[\overrightarrow{\gamma}_{q}^N(1)^2]^{1/2}.
$$
Note that for each $q$, $\mathbb{E}[\overrightarrow{\gamma}_{q}^N(1)^2]^{1/2}<\infty$ (e.g.~\cite[Corollary 5.2]{cerou1}, or by using the upper-bound on the $\overrightarrow{W}_{n}$).

Now, we move onto the expression $\mathbb{E}[(\bar{\xi}_{q,t-1}^N)^2]^{1/2}$. From the definition of $\bar{\xi}_{q,t-1}^N$, we have that
$$
\mathbb{E}[(\bar{\xi}_{q,t-1}^N)^2]^{1/2} =  \sup_x \overrightarrow{W}_{t-1}(x)  \sup_x \overrightarrow{Q}_{q,t-1}(1)(x) \mathbb{E}[\overrightarrow{\overline{Q}}_{q,t-1}(|[\overleftarrow{\gamma}_{t+1}^N-\overleftarrow{\gamma}_{t+1,\theta}](\overleftarrow{W}_t^{\xi}I_{gf}(.,\cdot)|)^2]^{1/2}
$$
where $\overrightarrow{\overline{Q}}_{q,t-1}(\cdot)(x):=\sup_x\overrightarrow{Q}_{q,t-1}(\cdot)(x)/\sup_x\overrightarrow{Q}_{q,t-1}(\cdot)(x)$. Application of Jensen's inequality and Fubini leads to
$$
\mathbb{E}[(\bar{\xi}_{q,t-1}^N)^2]^{1/2} \leq \sup_x \overrightarrow{W}_{t-1}(x)  \sup_x \overrightarrow{Q}_{q,t-1}(1)(x)
\overrightarrow{\overline{Q}}_{q,t-1}\Big( \mathbb{E}[|[\overleftarrow{\gamma}_{t+1}^N-\overleftarrow{\gamma}_{t+1,\theta}](\overleftarrow{W}_t^{\xi}I_{gf}(.,\cdot))|^2]\Big)^{1/2}.
$$
Then by \cite[Theorem 5.1, Corollary 5.2]{cerou1} (it is remarked that the corollary of that paper can be adapted to deal when $\overleftarrow{\gamma}_{t+1}$ integrates a bounded function), it follows for $N$ large enough relative to $T-t$ (we will take $N$ to infinity and $T-t$ is fixed) there exist some finite constant $C(T,t)$ that depends upon $T,t$ but not $q$ or $x_{t-1}$ such that
$$
\mathbb{E}[(\bar{\xi}_{q,t-1}^N)^2]^{1/2} \leq \sup_x G_{t-1}(x)  \sup_x Q_{q,t-1}(1)(x)\frac{C(T,t)}{\sqrt{N}}.
$$
Hence we have that:
$$
\sqrt{N}\mathbb{E}[|R(N)|] \leq \frac{C(T,t)}{\sqrt{N}}
$$
where $C(T,t)$ is some finite constant that may grow with $T$. We thus conclude as $T<\infty$.
\end{proof}



\end{document}